\documentclass[12pt,oneside]{article}

\usepackage{amsthm}
\usepackage[charter]{mathdesign}  
\SetMathAlphabet{\mathcal}{normal}{OMS}{cmsy}{m}{n}
\SetMathAlphabet{\mathcal}{bold}{OMS}{cmsy}{b}{n}

\usepackage[T1]{fontenc}
\usepackage{parallel}
\usepackage{setspace}
\usepackage[round, authoryear]{natbib}
\usepackage{tikz}
\usepackage{caption}
\onehalfspacing
\usepackage{graphicx}

\usepackage[toc]{appendix}
%librerie matematiche
%\usepackage{amsfonts}
%\usepackage{amssymb}
\usepackage{amsmath}
\usepackage{textcomp}
\usepackage{xcolor}
\definecolor{navyblue}{rgb}{0.0, 0.0, 0.5}
\definecolor{myred}{rgb}{0.39, 0.015, 0.15}
\definecolor{red(ncs)}{rgb}{0.77, 0.01, 0.2}
\usepackage[backref=page]{hyperref}
\hypersetup{
	colorlinks = true,
%	anchorcolor = navyblue
%	filecolor
%	menucolor = myred,
%	runcolor = navyblue,
%	citecolor = navyblue,
%	urlcolor = cyan
	allcolors = myred
}
\usepackage{cleveref}
\usepackage{thmtools}
\usepackage{thm-restate}
%\usepackage{mathrsfs}

%\declaretheorem[name=Theorem,numberwithin=section]{thm}
%\definecolor{dg}{RGB}{2,101,15}

\pagestyle{myheadings} \markright{\centerline{Categorize and randomize}}

\evensidemargin 0.30cm \oddsidemargin 0.30cm \textwidth 16cm 
\textheight 22cm \voffset-10mm

%%%%%%%%%%%%%%%%%%%%%%%%%%%%%%%%%%%%%%%%%
%%%%%%%%%%%%%%%%%%%%%%%%%%%%%%%%%%%%%%%%%

%     TITOLO

\author{
	Ester Sudano\thanks{School of Economics and Finance,
		Queen Mary University of London, UK. e-mail: e.sudano@qmul.ac.uk}%$\:$
}

\title{Categorize and randomize:\\
		  a permissive model of stochastic choice}

  % oppure  \DeclareMathOperator*{\argmax}{arg\,max}

\theoremstyle{plain}                    %stile corsivo
      %definizione ambiente teorema
%\newtheorem{prop}[theorem]{Proposition}    %definizione ambiente proposizione
%\newtheorem{corollary}[theorem]{Corollary}       %definizione ambiente corollario
%\newtheorem{lemma}[theorem]{Lemma}            %definizione ambiente lemma

\newtheorem{corollary}{Corollary}       %definizione ambiente corollario
\newtheorem{lemma}{Lemma} 

\theoremstyle{definition}               %stile roman
\newtheorem{definition}{Definition}[]%definizione ambiente definizione
\newtheorem{example}{Example}      %definizione ambiente esempio
\newtheorem{counterexample}{Counterexample}

\theoremstyle{remark}                   %stile per osservazioni
\newtheorem{remark}{Remark}          %definizione ambiente osservazione

\newtheoremstyle{blueconj}{}{}{}{}{\bfseries\color{blue}}{.}{.5em}{}
\theoremstyle{blueconj}

\newcommand{\cind}{\hyperref[cind]{\textsc{c-Independence}}}
\newcommand{\cneu}{\hyperref[cneu]{\textsc{c-Neutrality}}}
\newcommand{\IIA}{\textsf{IIA}}
\newcommand{\nsc}{\textsf{NSC}}
\newcommand{\rum}{\textsf{RUM}}
\newcommand{\eba}{\textsf{EBA}}
\newcommand{\scc}{\textsf{SCC}}
\newcommand{\scwc}{\textsf{SCWC}}
\newcommand{\X}{\mathcal{X}}
\newcommand{\I}{\mathcal{I}}

\begin{document}
	
	\maketitle

	\begin{abstract}
		\noindent We model stochastic choices with categorization.
		The agent preliminarily groups alternatives in homogenous disjoint classes, then randomly chooses one class and randomly picks an item within the selected class.
		We provide a characterization for the model and, further, we show that stochastic choices with categorization are generated by a population of deterministic choice functions that share a consistent two-step structure.
		%In a broader interpretation, 
		The model allows to describe the observed stochastic choice as the composition of independent subchoices, %\textquoteleft smaller\textquoteright\
		and this composition preserves rationalizability by Random Utility Maximization.
		A generalization of the model subsumes Luce model and Nested Logit. 
		The characterizing properties disclose the behavioral foundations of the models and shed some light on the implications of Luce's \IIA{} and its weaker forms for choice behavior.  
	\end{abstract}
	
\textbf{Keywords:} stochastic choice, categorization, Luce's \textsf{IIA}, nested stochastic choice

%%%%%%%%%%%%%%%%%%%%%%%%%%%%%%%%%%%%%%%%%%%%
%%%%%%%%%%%%%%%%%%%%%%%%%%%%%%%%%%%%%%%%%%%%

\section*{Introduction} \label{SEC:Introduction}

Categorization is a fundamental cognitive practice: %, pervasively adopted to understand the world around us. 
reducing objects and events to a unifying abstract representation often helps in grasping vast information. 
Similarly, economic agents may simplify their decision process by first categorizing items, then choosing one within the selected class. 
The goal of this work is to formalize and analyze the behavior implied by such procedure.  
The resulting insights can prove useful for establishing relations of substitutability among options. 

We present \emph{stochastic choices with categorization} (\scc), a model of probabilistic choice informed by a consistent classification of options at disposal. 
In the model, the finite grand set $X$ of alternatives is partitioned in disjoint \textit{classes}. 
A stochastic choice with categorization (\scc) is the result of two steps: first, classes are considered and one is chosen, then items within it are compared and one is selected. 
Hence any \scc{} can be decomposed into \textquoteleft smaller\textquoteright\ independent choices, one defined on the set of classes together with a family of choices, each defined on a class. 
Each sub-choice can then be analyzed separately, so that an analyst can select the fittest model for each, to possibly enhance modelling performance and choice predictions. 
We refrain from further disciplining choice behavior, so as to allow for any form of behavioral decision-making. 

To illustrate, consider a decision maker who purchases a camera. 
She hierarchically evaluates a set of attributes she deems relevant for the choice. 
First, she focuses on a subset of macroscopic attributes, such as price range, brand and compact or mirrorless system, and chooses the class that satisfies the desired attributes, independently of available options within that class. 
Within the selected class, the final choice is made by evaluating particular attributes, such as sensor size and video performance, only among surviving items.
Similar procedures are ubiquitous in consumption choices, and are likewise employed in hiring decisions or by financial market investors who pick a portfolio within a suitable class of risk. 

The characterization of the model relies on the definition of a \emph{category}, interpreted here as a set of homogeneous options,
based on two behavioral assumptions. 
First, external items are irrelevant for determining the choice among items in a category.
For instance, if the decision maker classifies cameras as mirrorless or compact, the availability of a compact $c$ does not change her preference over mirrorless $a$ and $b$, as $c$ is too dissimilar to convey any extra information on $a$ and $b$. 
This requires a form of independence in choice behavior, whose scope is dictated by the categorization structure (\cind).
Second, any set of items in the same category equally affects the probability of choice for external options (\cneu), regardless of its cardinality and identity. 
This form of substitutability entails that, as long as some mirrorless camera is already available, the presence of an additional one does not alter the probability a compact camera is chosen from a menu; it can only vary the chances another mirrorless camera is selected. 

Typically, the variability in choice observations in experimental and marketing settings has been linked to variability in taste or noise in utility evaluations for an individual, or obtained as the result of the aggregation of choices in a population \citep{Thurstone1927, Luce1959, BlockMarshack1960}. 
In line with this, we clarify the relation to deterministic choice behavior 
by exhibiting a population of deterministic choice functions that generates stochastic choices with categorization. 
In particular, we show that any \scc{} can be represented by a probability distribution on a set of deterministic choices which share a two-step structure with respect to the same classification of items. 
We then introduce a generalization of the model, \textit{stochastic choice with weak categorization}, which allows choice probabilities for classes to be menu dependent. 
In its characterization, we employ a notion of \textit{weak category}, which only requires \cind.  
We explore the relation of both models to the class of Random Utility Models, and show that any \rum{} stochastic choice with categorization can be decomposed in \rum{} choices, and vice versa. 
Finally, stochastic choice with weak categorization subsumes Luce Model, Nested Stochastic Choice and Nested Logit. 
By contrasting these models to our definitions, we clarify the behavioral implications of Luce's \textsf{IIA} and weaker related axioms in literature.

\paragraph{Relationship with the literature.}

\cite{CarpentiereDoignon2025} generalizes the definition and characterization of \scc{} to the broader framework of stochastic choice correspondences, allowing the choice of a subset -- rather than a single item -- from a menu. 
While their work is primarily concerned with exploring properties that are preserved in the composition and decomposition of stochastic choices, with particular focus on rationalizability by Random Utility, our simpler setting allows to underscore the behavioral implications of the model and its relations to deterministic choice behavior. 
In this regard, here we briefly anticipate the connection to Nested Stochastic Choices (\nsc) \citep{Kovach2022}, a generalization of Nested Logit \citep{McFadden1978}, to which our models closely relate. 
In Nested Stochastic Choices, disjoint nests are specified by peculiar violations of Luce's Independence of Irrelevant Alternatives (\IIA) \citep{Luce1959} due to the \textit{similarity effect}, provided that
\IIA{} holds within nests. 
Denote $p(x, A)$ and $p(y,A)$ the probability that item $x$ and $y$ are chosen from menu $A$. 
Luce's \IIA{} requires the ratio $\frac{p(x, A)}{p(y, A)}$ to be fixed for any set $A$ in which $x$ and $y$ are available. 
To illustrate the similarity effect, we follow the famous example by \cite{Debreu1960} and \cite{McFadden1978}, and consider a decision maker choosing a way to commute. 
When the only options are a train ($t$) and a blue bus ($b$), choice probabilities satisfy $\frac{p(t, \{t, b\})}{p(b, \{t, b\})} = 1$. 
If a red bus ($r$) covering an identical route is made available, the agent regards it as a duplicate of $b$, and the probability of choosing $b$ is hurt. 
This is the similarity effect, resulting in the inequality $\frac{p(t, \{t, b, r\})}{p(b, \{t, b, r\})} > \frac{p(t, \{t, b\})}{p(b, \{t, b\})}$. 
In Nested Stochastic Choice, and Nested Logit, this is the only source of violation of Luce's \IIA.

Yet, evidence suggests that especially when items are comparable, but not exact replicas, choices can exhibit further context effects. 
Suppose item $a$ has better quality but worse economy than $b$. 
The \textit{attraction effect} occurs when the addition of $d$, inferior to $a$ both in quality and price, makes the choice of $a$ more likely \citep{HuberPaynePuto1982}.  
On the other hand, the introduction of an option $c$, with extremely good quality but poor economy, can make $a$ a compromise between $c$ and $b$ and enhance its chances of choice. 
This is known as the \textit{compromise effect} \citep{Simonson1989}. 
We depart from \nsc{} by allowing these common phenomena, and admitting failures of \IIA{} within classes. 
This relaxation of \IIA{} is precisely captured by \cind.
Crucially, these common context effects are inconsistent with the axioms of rationality in stochastic choice, and thus imply the lack of a random utility representation.
Our definitions of a (weak) category therefore captures relations of similarity, without requiring the existence of an underlying random utility function. 
We refer to Section~\ref{SEC:Relation to literature scc} for a thorough analysis of the relation with literature. 

\smallskip

The paper is outlined as follows. 
Section~\ref{SEC:Model} introduces the model, then characterized in Section~\ref{SEC:Characterization}.
In Section~\ref{SEC:deterministic resolvability}, we study stochastic choices with categorization as generated by a population of deterministic choices.
We provide a generalization in Section~\ref{SEC:Generalization}. %, and investigate the relationship  in Section~\ref{SEC:RUM}.
Section~\ref{SEC:Relation to literature scc} illustrates connections to related literature. %, with a particular focus on Luce's \IIA{} and its variations. % and to random utility models (\rum).
 %, as modeled in Cantone et al.~\cite{Cantoneetal.2020}. 
\hyperref[conclusion]{Conclusions} presents final remarks. 
Proofs and counterexamples are collected in the \hyperref[APP]{Appendix}.

%%%%%%%%%%%%%%%%%%%%%%%%%%%%%%%%%%%%%%%%%%%%
%%%%%%%%%%%%%%%%%%%%%%%%%%%%%%%%%%%%%%%%%%%%

\section{Stochastic choice with categorization} \label{SEC:Model}

We adopt the usual setup for stochastic choices from a finite set of options.
Let $X$ be a finite set of alternatives of cardinality $n \geqslant 3$, and denote $\X = 2^X \setminus \{\varnothing\}$ the collection of its nonempty subsets, referred to as `menus'.
The decision maker chooses one alternative from each menu, possibly randomizing, so that a probability of choice can be associated to each item, for each menu. 
This is represented by a stochastic choice.

\begin{definition} \label{DEF: stochastic choice function}
	A \textbf{stochastic choice} is a map $p \colon X \times \X \to [0,1]$ such that $p(a, A) = 0$ for all $a \notin A$, and $\displaystyle \sum_{a \in A} p(a, A) = 1$.
	If for all $A \in \X$, $a \in A$, $p(a, A) > 0$, then $p$ is \textbf{positive}.
\end{definition}

We interpret $p(a, A)$ as the probability item $a$ is chosen from menu $A$. 
For $C \subseteq A$, we define $p(C, A) \ = \ \displaystyle \sum_{a \in C} p(a, A)$ as the probability that an item in $C$ is chosen within $A$.
To simplify notation, we omit curly brackets in the argument of the choice function, e.g., we write $p(a, abc)$ in place of $p(a, \{a,b,c\})$.
Choice $p$ is the primitive in our model, and is assumed to be observable. 
Hereafter, we only focus on positive stochastic choice. \\

Now, partition $X$ in indexed subsets $\{X_i\}_{i \in I}$. 
The partition is induced by a surjective map $\pi \colon X \to I$, which assigns each item to a \textit{class} $i \in I$.
We say that $a$ belongs to class $i$ if $a \in X_i$ ($\pi(a) = i$). 
For each $i \in I$, let $\X_i = 2^{X_i} \setminus \{\varnothing\}$ be the collection of nonempty subsets of $X_i$.
Similarly, $\I = 2^I \setminus \{\varnothing\}$. 
We model a two-step choice process: 
the agent first picks a class, then an item within it.
In both choices, the agent randomizes among options at disposal.
This results in a \emph{stochastic choice with categorization} (\scc). 
%This is formalized in the following definition. 

\begin{definition} \label{DEF: resolvable functions}
	A positive stochastic choice $p \colon X \times \X \to [0, 1]$ is \textbf{with categorization} (\scc) if there exist 
	\begin{itemize}
		\item a partition of $X$ in classes $\{X_i\}_{i \in I}$, induced by a surjective map $\pi \colon X \to I$; 
		\item a positive stochastic choice on indices $\omega \colon I \times \I \to [0,1]$;
		\item a family $\{\sigma_i\}_{i \in I}$ of positive stochastic choices on classes, $\sigma_i \colon X_i \times \X_i \to [0,1]$;
	\end{itemize}
	such that, for all $A \in \X, \ i \in \pi(A), \ a \in A \cap X_i$,
	$$
	p(a, A) \ = \ \omega(i, \pi(A)) \ \sigma_i(a, A \cap X_i).
	$$
\end{definition}

We write $(p, \{X_i\}_{i \in I})$ to denote a stochastic choice $p$ with categorization with respect to partition $\{X_i\}_{i \in I}$.\\ 

In Definition~\ref{DEF: resolvable functions}, function $\pi$ represents the classification process, which is defined to be consistent across menus. 
%To avoid triviality, we require the existence of at least two classes in $X$
Then, choice $\omega$ gives the probability of choosing a class in $\pi(A)$.
Note that, by definition, $\omega(i, \pi(A)) = \omega(i, \pi(B))$ for any $A, B$ such that $\pi(A) = \pi(B)$. 
In this sense, the probability of choosing any class is independent of its composition.
Finally, $\sigma_i(a, A \cap X_i)$ is the probability $a$ is chosen among items in its class.\\

Choices $\omega$ and $\{\sigma_i\}_{i \in I}$ have a natural relation to $(p, \{X_i\}_{i \in I})$:
\begin{itemize}
	\item $p(A \cap X_i, A) = \displaystyle \sum_{a \in A \cap X_i} p(a, A) \ = \ \displaystyle \sum_{a \in A \cap X_i} \sigma_i(a, A \cap X_i) \ \omega(i, \pi(A)) \ = \ \omega(i, \pi(A))$;
	\item for any $A \subseteq X_i, \quad p(a, A) \ = \ \sigma_i(a, A)$.
\end{itemize}

The decomposition $p(a, A) \ = \ \omega(i, \pi(A)) \ \sigma_i(a, A \cap X_i)$ allows to see that $\sigma_i(a, A \cap X_i)$ simply is the probability of choosing $a$ conditional on choosing in $X_i$.  
This conditional probability remains unaffected by any changes to the other classes in the menu.

All stochastic choices are trivially \scc{} with respect to the trivial partition $P = \{\{X\}\}$ and the $n$-partition of $X$, containing all singletons. 
Hereafter, we focus on cases in which the partition is significant. 

\begin{definition}
	A \scc{} $(p, \{X_i\}_{i \in I})$ is non-degenerate if $1 < |I| < n$.
\end{definition}

A stochastic choice can be non-degenerate \scc{} with respect to multiple partitions. 
Denote $\mathcal{P}$ the set of all such partitions.
For two distinct $P, P^\prime \in \mathcal{P}$, we say that $P$ is \textit{coarser} than $P^\prime$, denoted $P \succ P^\prime$, if every element of $P^\prime$ is a subset of some element in $P$. 
Formally, let $P = \{X_i\}_{i \in I}$ and $P^\prime = \{X_j\}_{j \in J}$. 
We have $P \succ P^\prime$ if for all  $j \in J$ there exists $i \in I$ such that $X_j \subseteq X_i$. 
%In other words, $P$ is coarser than $P^\prime$ .
Relation $\succ$ is a partial order on $\mathcal{P}$, and the set of the coarsest partitions is $\max(\mathcal{P}, \succ) = \{P \in \mathcal{P} \colon \nexists P^\prime \text{ such that } P^\prime \succ P\}$. 

\begin{restatable}[]{prop}{uniquenessofcoarsestpartition}
%	Let $\mathcal{P}$ be the set of partitions of $X$ for which $p \colon X \times \X \to [0,1]$ is non-degenerate \scc. 
	There exists a unique coarsest partition $\{X_i\}_{i \in I} \in \mathcal{P}$. 
	Also, any other partition $\{X_j\}_{j \in J} \in \mathcal{P}$ is such that, for any $j \in J$, $X_j \subseteq X_i$ for some $i \in I$.
\end{restatable}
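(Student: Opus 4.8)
The plan is to show that the coarsening relation $\succ$, together with a "join" operation on partitions in $\mathcal{P}$, makes $\mathcal{P}$ a join-semilattice with a maximum; since $\mathcal{P}$ is finite (it is a subset of the finite set of partitions of $X$), a semilattice with all pairwise joins has a greatest element, which is the desired unique coarsest partition. The second sentence of the statement is then just the definition of that element being $\succ$-above every other member of $\mathcal{P}$, unwound through the footnote's formal definition of $\succ$.

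Concretely, I would proceed as follows. First, given two partitions $P, P' \in \mathcal{P}$, define $P \vee P'$ to be the common coarsening generated by "merging" any two classes that are not disjoint in the union $P \cup P'$ — equivalently, the connected components of the graph on $X$ whose edges link $a,b$ whenever $a,b$ lie in a common class of $P$ or a common class of $P'$. This $P \vee P'$ is a genuine partition, and it is coarser than (or equal to) both $P$ and $P'$ by construction. The key step is then to verify that $P \vee P'$ again belongs to $\mathcal{P}$, i.e., that the original choice $p$ is \scc{} with respect to $P \vee P'$. For this I would build the required data $(\omega^{\vee}, \{\sigma_i^{\vee}\})$ for $P \vee P'$ from those of $P$ (or $P'$): each class $Y$ of $P \vee P'$ is a union of classes of $P$, so one defines $\omega^{\vee}$ on indices of $P\vee P'$ by aggregating the $P$-class probabilities, and defines $\sigma^{\vee}_Y$ on $Y$ by the conditional choice $p(\cdot, \cdot)$ restricted to subsets of $Y$ — using the two displayed identities from the excerpt, $p(A\cap X_i, A) = \omega(i,\pi(A))$ and $p(a,A) = \sigma_i(a,A)$ for $A \subseteq X_i$, to check that these pieces multiply back to $p$ and are positive stochastic choices. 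One must also verify the independence-of-composition property of $\omega^{\vee}$ (that it depends only on the set of $(P\vee P')$-classes present), which should follow from the corresponding property of $\omega$ after aggregation. Finally, take $P^* = \bigvee_{P \in \mathcal{P}} P$, the join of all (finitely many) partitions in $\mathcal{P}$: by the semilattice property it lies in $\mathcal{P}$ and is $\succ$-above every member, hence it is the unique coarsest partition, and every $\{X_j\}_{j\in J}\in\mathcal{P}$ refines it, giving the second claim.

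A cleaner alternative, which I would actually prefer to write up, avoids constructing joins explicitly: show directly that if $P,P'\in\mathcal{P}$ and $X_i \in P$, $X'_j \in P'$ with $X_i \cap X'_j \neq \varnothing$, then one of $X_i \subseteq X'_j$ or $X'_j \subseteq X_i$ holds — i.e., the classes across any two partitions in $\mathcal{P}$ are \emph{nested}, never crossing. If that "non-crossing" lemma holds, then $\bigcup \mathcal{P}$ (all classes of all partitions in $\mathcal{P}$), ordered by inclusion, is a forest, its maximal elements form a partition $P^*$, each such maximal class is easily seen to still support an \scc{} decomposition (again via the conditional-probability identities), and $P^*$ is the unique coarsest partition with everything refining it. To prove non-crossing, suppose $X_i$ and $X'_j$ cross: pick $a \in X_i \cap X'_j$, $b \in X_i \setminus X'_j$, $c \in X'_j \setminus X_i$. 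Applying \cind{}/\cneu{}-type consequences of being \scc{} — concretely, in menu $\{a,b,c\}$, the ratio $p(a,abc)/p(b,abc)$ must equal $\sigma_i(a,ab)/\sigma_i(b,ab) \cdot [\text{independence}]$ computed within $P$, but also be governed by the cross-class structure of $P'$ where $a$ and $c$ are in one class and $b$ in another — derive a contradiction with the independence-of-composition property of one of the two $\omega$'s.

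**Main obstacle.** The delicate part is the non-crossing lemma (or, in the first approach, verifying $P\vee P' \in \mathcal{P}$): this is exactly where the substantive content of being \scc{} — the independence of the class-choice $\omega$ from class composition — has to be exploited, and it requires a careful menu-by-menu argument isolating three witnesses $a \in X_i\cap X'_j$, $b\in X_i\setminus X'_j$, $c\in X'_j\setminus X_i$ and playing the two decompositions of $p$ on small menus against each other. Everything after that — assembling the forest, extracting maximal classes, re-deriving a decomposition for the coarsest partition from conditional probabilities, and reading off uniqueness — is routine bookkeeping using the two displayed identities already recorded in the excerpt and the finiteness of $\mathcal{P}$.
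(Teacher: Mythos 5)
Your preferred route is essentially the paper's own proof: the paper's key lemma states exactly your non-crossing claim (classes of two partitions in $\mathcal{P}$ are disjoint or nested), proved with the same three witnesses on the menu $\{a,b,c\}$ by playing the two decompositions against each other (menu-independence of one $\omega$ forces $p(\{a,b\},\{a,b,c\})=1$, contradicting positivity), and the coarsest partition is then assembled from the maximal classes with $\omega$ and the $\sigma_i$'s rebuilt from the conditional-probability identities, just as you outline. So the proposal is correct and matches the paper's argument in all essentials.
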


The proof (see Appendix~\ref{PROOF:uniquenessofcoarsestpartition}) establishes that %an additional property: 
%Any two distinct partitions $\{X_i\}_{i \in I}$ and $\{X_j\}_{j \in J}$ for which $p$ is non-degenerate \scc{} must be such that 
%$X_i = \bigcup_{j \in J^\prime} X_j$ where $|X_j| = 1$ for all $j \in J^\prime$, for some $i \in I$.
%Thus, 
finer partitions can only be obtained by breaking down a class into subclasses, which is only possible when some choice $\sigma_i$ on a class $X_i$ is itself non-degenerate \scc. 
If we maintain that such partitions mirror a specific form of similarity relation among items, the existence of a partition $P \in \mathcal{P}$ with $X_i \in P$ posits the existence of a \textquoteleft similarity threshold\textquoteright, for which all items in $X_i$ are sufficiently alike and all external items are not. 
A finer partition corresponds to a more stringent threshold applied within $X_i$. 
Our characterization outlines the behavior revealing that such structures can be recovered. 
%It follows that the coarsest partition is also the most informative on the classification outline.\\

%Also, the partition in Definition~\ref{DEF: resolvable functions} need not be unique, so that $p$ could be \scc{} with respect to more than one classification outline.

The next examples show a natural instance of the model, and a dynamic interpretation, respectively.

\begin{example}[\it Two-step elimination by aspects]
	We formalize a two-step elimination by aspects \citep{Tversky1972}. 
	In the \textit{elimination by aspects} (\textsf{EBA}) model, options are perceived as bundles of attributes, and the agent selects the final choice by sequential elimination of items that lack attributes of interest. 
	Aspects are employed in a random order, with a probability of use at each elimination stage that depends on their value, so that the choice is stochastic.   
	For each item $x \in X$, denote $x^\prime$ the set of distinctive aspects present in $x$.
	Also, denote $A^\prime$ the set of distinctive aspects possessed by some alternative in menu $A$.
	Conversely, given a distinctive aspect $\epsilon$, $A_{\epsilon}$ is the set of items that present $\epsilon$, that is, $A_{\epsilon} = \{x \in A \colon \epsilon \in x^\prime\}$.
	A stochastic choice is \textsf{EBA} if there exists a positive function $u \colon X^\prime \to \mathbb{R}$, such that, for all $x \in A \subseteq X$
	$$
	\displaystyle p(x, A) \ = \ \frac{ \displaystyle \sum_{\epsilon \in x^\prime} u(\epsilon) \ p(x, A_\epsilon)}{\displaystyle \sum_{\delta \in A^\prime} u(\delta)}. 
	$$
	
	In a nutshell, the probability that item $x$ is chosen from $A$ depends on the value of its distinctive aspects, as captured by function $u$, and on the probability that $x$ stands out among alternatives sharing an aspect of interest, as given by $p$ itself. 
	
	Let us assume the agent's decision process for the choice of a car is tiered. 
	A set $M \subset X^\prime$ of macroscopic aspects, such as different price ranges and dimensions, is evaluated before set $N$ of remaining attributes, such as degrees of efficiency and colors, is considered.
	At each of the two steps, \textsf{EBA} is applied. 
	Let $x$ possess aspects in $M_x \subseteq M$ and $N_x \subseteq N$. 
	Similarly, for any menu $A$, let $M_A = M \cap A^\prime$ and $N_A = N \cap A^\prime$.
	The resulting stochastic choice is given by
	$$
	\displaystyle p(x, A) \ = \ \frac{\displaystyle \sum_{\epsilon \in M_x} u(\epsilon)}{\displaystyle \sum_{\delta \in M_A}u(\delta)} \ \cdot \ \frac{\displaystyle \sum_{\epsilon \in N_x} u(\epsilon) \ p(x, B_\epsilon)}{\displaystyle \sum_{\delta \in N_B} u(\delta)}, 
	$$
	where $B = \{y \in A: M_y = M_x\}$.
	The first ratio gives the probability that items presenting the combination of aspects $M_x$ survive the initial round of eliminations, and the second ratio is the probability that $x$ is selected by sequential elimination of surviving items.
	Combinations of macroscopic aspects specify classes, e.g., the class of compact cameras. 
	After the first stage,  only one class remains. 
	Further, the first ratio only depends on which other classes are available, not on their cardinality nor the identity of available options; the second ratio is solely determined by which items are available within the class of interest.  
	Clearly, the process is an instance of choice with categorization. 
\end{example}

The composition of \textsf{EBA} choices does not yield an \textsf{EBA} stochastic choice.\footnote{\textsf{EBA} choices belong to the class of Random Utility Models. We show that the choice resulting from this composition still belongs to the class (see Theorem \ref{THEO: RUM}).} 
Similarly, any multi-step choice may be composite and therefore lack a reliable representation by a single model. 
Our analysis investigates the conditions for decomposing an observed stochastic choice into independent subchoices.
This decomposition is instrumental for identifying applicable models for each subchoice.

\begin{example}[\it Dynamic choice]
	The structure of a stochastic choice with categorization can be represented by a decision tree, as defined by \cite{KrepsPorteus1979}. 
	For any given menu $A \in \X$, the first node corresponds to the problem of selecting a class in $\pi(A)$. 
	At this stage, the agent randomizes according to a distribution given by $\omega$. 
	The choice of a class leads to a new node, where an alternative within the class must be chosen. 
	Randomization at this stage is governed by $\sigma_i$, that applies to $A \cap X_i$, for the relevant class $i$. 
	The model imposes that each final node, each representing a distinct alternative in $A$, is only reachable by a single node (class), and especially that this structure is replicated across all menus. 
	This induces the required partition structure for $X$. 
	Also, the definition of choice $\omega$ entails a form of myopia, for which the specific composition of classes is irrelevant at the first stage. 
	This is akin to choosing to pursue a university degree based solely on the immediate satisfaction of studying certain subjects, neglecting to evaluate the ultimate alternatives, that is, the attainable job positions, that are contingent upon completing that course of study.
\end{example}

In this dynamic setting, we interpret the stochastic choice as an expression of deliberate randomization by a single individual. 
Given the limited observability for such instances, choice probabilities for the full domain could only be defined through thought experiments with counterfactual menus.
We still consider $p$ the primitive, and its decomposition is revealing of the emergent tree structure and of this form of myopia.\\

Counterexample~\ref{EX:nonresolvable} in Appendix~\ref{SEC:a nonSSC} exhibits a choice which is not \scc, showing the model is falsifiable.%, despite its flexible structure.

%%%%%%%%%%%%%%%%%%%%%%%%%%%%%%%%%%%%%%%%%%%%
%%%%%%%%%%%%%%%%%%%%%%%%%%%%%%%%%%%%%%%%%%%%

\section{A characterization} \label{SEC:Characterization}

Let $p \colon X \times \X \to [0,1]$ be a positive stochastic choice. %defined on a finite $X$, with $\X = 2^X \setminus \{\varnothing\}$.
To deduce whether $p$ is \scc, we define the choice behavior revealing the existence of a category in $X$. 

\bigskip

\begin{definition} \label{DEF:fun-proper category}
	A \textbf{category} is a set $C \in \X$ such that the following properties hold:\\
	%\begin{itemize}
		%\item[\textsc{c - independence}] 
		
		\cind \label{cind}: for all $S \subseteq C$, $a, b \in S$, $E \subseteq X \setminus C,$
		$$  \displaystyle \frac{p(a, S)}{p(b, S)} \ = \ \frac{p(a, S \cup E)}{p(b, S \cup E)};$$
		
		%\item[\textsc{c - neutrality}] 
		\cneu \label{cneu}: for all $E \subseteq X \setminus C, \ x \in E, \ S \subseteq C, \ S \neq \varnothing$,
		%$a \in S \cap T$, $E, \beta \subseteq X \setminus C, \ \pi(E) = \pi(\beta)$,
		$$ p(x, S \cup E) \ = \ p(x, C \cup E).$$
	%\end{itemize}
\end{definition}

Set $X$ and singletons are trivially categories in $X$. 
We say that $C$ is a \textbf{non-trivial} category if $1 < |C| < |X|$.\\

%We interpret a category as a set of mutually similar items, whose odds are independent from the availability of mutually dissimilar items.

The interpretation of a \emph{category} relies on two assumptions.
First, external alternatives are sufficiently distinct to not convey any relevant information that would alter the assessment of items within the category.
Thus, \cind{} posits that the relative choice probabilities for items within a given category remain invariant to the presence of external items. 
Second, the competition for any given alternative is confined to other items within its category, that are therefore perceived as substitutes. 
This form of substitutability is captured by \cneu, requiring that the addition of items of a represented category does not affect the probability any external item is chosen. 
In Appendix~\ref{SEC:independence of the axioms}, we show that \cind{} and \cneu{} are independent.

It is easy to see that the union of two categories is \textit{not} a category, in general. 

\begin{restatable}[]{prop}{categoriesintersection}
	Any two distinct non-trivial categories $C, D$ must be either disjoint or nested. Namely, $C \cap D \neq \varnothing$ implies that $C \subseteq D$ or $D \subseteq C$.
\end{restatable}

See Appendix~\ref{SEC: categorical structures} for a proof. 
These results delineate the admissible categorical structures. 
If we interpret categorization as related to similarity relations among alternatives, we learn that items within a category are sufficiently alike, and the degree of similarity can be further specialized to isolate a subset that is itself a category. 
%On the other hand, Definition~\ref{DEF:fun-proper category} prevents items from belonging to two distinct categories and 
The main theorem can now be stated.

\bigskip

\begin{restatable}[]{theorem}{resolvability}	\label{THEO: characterization resolvability}
	A positive stochastic choice $p \colon X \times \X \to [0, 1]$ is a non-degenerate \scc{} if and only if there exists a non-trivial category in $X$.
\end{restatable}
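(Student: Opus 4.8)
The plan is to prove the two implications separately: the forward one is a direct unpacking of Definition~\ref{DEF: resolvable functions}, while the converse requires building the two‑stage decomposition from scratch.

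\emph{From \scc{} to a non-trivial category.} Suppose $(p,\{X_i\}_{i\in I})$ is a non-degenerate \scc, so $1<|I|<n$. First I would note there is an index $i$ with $1<|X_i|<n$: since $|I|<n$ some class is not a singleton, and since $|I|>1$ that class is a proper subset of $X$. I claim $C:=X_i$ is a non-trivial category. For \cind, given $S\subseteq X_i$, $a,b\in S$ and $E\subseteq X\setminus X_i$, the identities recorded after Definition~\ref{DEF: resolvable functions} give $p(a,S)=\sigma_i(a,S)$ and, since $(S\cup E)\cap X_i=S$, also $p(a,S\cup E)=\omega(i,\pi(S\cup E))\,\sigma_i(a,S)$; dividing the two analogous expressions for $a$ and $b$ cancels the $\omega$ factor and yields the required equality, with positivity keeping the denominators nonzero. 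For \cneu, take $E\subseteq X\setminus X_i$, $x\in E$ lying in some class $X_j$ with $j\neq i$, and nonempty $S\subseteq X_i$; then $(S\cup E)\cap X_j=E\cap X_j=(X_i\cup E)\cap X_j$ and $\pi(S\cup E)=\{i\}\cup\pi(E)=\pi(X_i\cup E)$, so both $p(x,S\cup E)$ and $p(x,X_i\cup E)$ equal $\omega(j,\{i\}\cup\pi(E))\,\sigma_j(x,E\cap X_j)$, which is \cneu.

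\emph{From a non-trivial category to an \scc.} Let $C$ be a non-trivial category, and take the partition $\{X_i\}_{i\in I}$ with $I:=\{\star\}\cup(X\setminus C)$, $X_\star:=C$ and $X_x:=\{x\}$ for $x\notin C$; since $1<|C|<n$ this partition is non-degenerate. I would put $\sigma_\star:=p$ restricted to submenus of $C$ and let each $\sigma_x$ be the trivial choice on $\{x\}$. Everything then hinges on defining $\omega$ and checking the product formula. Fix a menu $A$ with $S:=A\cap C\neq\varnothing$ and write $E:=A\setminus C$, so $A=S\sqcup E$ and $\pi(A)=\{\star\}\cup E$. Applying \cind{} with this $S$ and $E$ shows $p(a,A)/p(a,S)$ is the same for every $a\in S$; summing numerators and denominators over $a\in S$ identifies that common value as $p(S,A)$, hence $p(a,A)=p(S,A)\,p(a,S)$ for all $a\in S$. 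Next \cneu{} gives $p(x,A)=p(x,C\cup E)$ for all $x\in E$, whence $p(S,A)=1-\sum_{x\in E}p(x,C\cup E)=p(C,C\cup E)$, a quantity that — like $p(x,A)$ — depends on $A$ only through $\pi(A)$. I would therefore define $\omega(\star,\{\star\}\cup E):=p(C,C\cup E)$, $\omega(x,\{\star\}\cup E):=p(x,C\cup E)$ for $x\in E$, and $\omega(i,J):=p(i,J)$ on menus $J\subseteq X\setminus C$ with $\star\notin J$. A short check confirms $\omega$ is a positive stochastic choice (for $\star\in J$ the normalization reduces to $\sum_{a\in C\cup E}p(a,C\cup E)=1$), and the two displayed consequences of the axioms give $p(a,A)=\omega(\pi(a),\pi(A))\,\sigma_{\pi(a)}(a,A\cap X_{\pi(a)})$ for every $A$ and $a\in A$: the case $a\in C$ uses both identities, and the case $a\notin C$ uses \cneu{}, or the third clause of $\omega$ when $A\cap C=\varnothing$.

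The hard part will be the converse, and the two axioms enter exactly there: one must show that the two a priori $A$‑dependent quantities — the conditional probability $p(a,A)/p(a,A\cap C)$ and the weight $p(A\cap C,A)$ of the chosen class — depend neither on which $a\in A\cap C$ one inspects nor on the particular submenu $S=A\cap C$ of $C$. \cind{} removes the first dependence and \cneu{} removes the second, by allowing $S$ to be replaced by all of $C$; after these two reductions the definition of $\omega$ and the verification of the product formula are mechanical. I would also check separately the degenerate sub‑cases $E=\varnothing$ (i.e.\ $A\subseteq C$, where $\omega(\star,\{\star\})=1$) and $|A\cap C|=1$, which are consistent with the formulas above.
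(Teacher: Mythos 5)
Your proof is correct, and both directions follow the same constructive strategy as the paper: the forward direction reads \cind{} and \cneu{} off Definition~\ref{DEF: resolvable functions} exactly as the paper does, and the converse pins down the within-class choices by the ratio argument (the common value $\Delta=p(S,A)$) and uses \cneu{} to make the class weights menu-independent. The only substantive difference is the partition you build in the converse: you take the single category $C$ together with singletons for all outside items, whereas the paper greedily extracts further non-trivial categories from $X\setminus C$ before filling in singletons. Your leaner partition buys a simpler well-definedness check for $\omega$ --- singleton classes have menu-independent content, so a single application of \cneu{} (replacing $S$ by all of $C$) settles every class weight --- while the paper's construction needs the ``substitute categories one by one'' iteration to compare two menus with the same index set, but in exchange produces a partition in which every non-singleton class is itself a category, which is the version that feeds naturally into the corollary on the coarsest partition. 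Your explicit treatment of the boundary cases ($A\subseteq C$, $A\cap C=\varnothing$, $|A\cap C|=1$) and of the verification that $\omega$ is a positive stochastic choice is, if anything, more careful than the paper's.
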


The proof is constructive. \cind{} fixes a stochastic choice on each class ($\{\sigma_i\}_{i \in I}$), while \cneu{} allows to specify a well-defined choice among classes ($\omega$) (see Appendix~\ref{PROOF:categorization}).
%Thus, \cind{} and \cneu{} allow the observer to detect categories in $X$.
Disjoint non-trivial categories are classes in the partitions that define non-degenerate \scc.
In particular, the coarsest partition contains all maximal non-trivial categories in $X$. 

\begin{corollary}
	Let $\mathcal{P}$ be the set of partitions for which $p$ is non-degenerate \scc, and let $\mathcal{C}^M$ be the collection of non-trivial maximal categories in $X$. 
	The unique coarsest partition $\{X_i\}_{i \in I} \in \mathcal{P}$ is such that for all $C \in \mathcal{C}^M$, $C = X_i$, for some $i \in I$.
\end{corollary}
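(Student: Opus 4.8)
The plan is to assemble three facts already available. First note that $\mathcal{C}^M \neq \varnothing$: since $p$ is a non-degenerate \scc, Theorem~\ref{THEO: characterization resolvability} produces some non-trivial category, and a maximal one exists because $X$ is finite. Fix a maximal non-trivial category $C \in \mathcal{C}^M$; the goal is to show that $C$ is one of the classes $X_i$ of the coarsest partition. The three moves are: (i) exhibit a partition in $\mathcal{P}$ that has $C$ as a class; (ii) transport $C$ up into the coarsest partition; (iii) use maximality of $C$ to pin it down exactly.

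For move (i), since $C$ is a non-trivial category, the constructive ``if'' direction of Theorem~\ref{THEO: characterization resolvability} applies to the partition $P_C := \{C\} \cup \{\{x\} : x \in X \setminus C\}$: \cind{} on $C$ fixes a well-defined within-class choice on $C$ (the remaining classes being singletons, with trivial within-class choice), and \cneu{} on $C$ makes the choice among classes well defined, so $p$ is a \scc{} with respect to $P_C$. Moreover $1 < |P_C| = 1 + (n - |C|) < n$ exactly because $1 < |C| < n$, hence $P_C$ is non-degenerate and $P_C \in \mathcal{P}$.

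For move (ii), I would invoke the Proposition asserting that there is a unique coarsest partition and that every class of any $P \in \mathcal{P}$ is contained in some class of it. Applied to the class $C$ of $P_C$, this yields $C \subseteq X_{i_0}$ for some $i_0 \in I$. Since $|X_{i_0}| \geq |C| \geq 2$ and, $\{X_i\}_{i \in I}$ being non-degenerate, $|I| \geq 2$ forces $|X_{i_0}| \leq n - 1 < n$, the class $X_{i_0}$ is non-singleton and proper. By the ``only if'' direction of Theorem~\ref{THEO: characterization resolvability} --- whose proof shows that every non-singleton class of a \scc{} partition obeys \cind{} and \cneu{}, i.e.\ is a category --- the set $X_{i_0}$ is a non-trivial category. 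For move (iii), $C \subseteq X_{i_0}$ with both sets non-trivial categories and $C$ maximal among non-trivial categories forces $C = X_{i_0}$. As $C \in \mathcal{C}^M$ was arbitrary, every maximal non-trivial category is a class of the coarsest partition, which is the claim.

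The only delicate point is the interface with Theorem~\ref{THEO: characterization resolvability}: I am using not merely the stated equivalence but (a) the specific partition produced in its proof (carve out the category, leave everything else as singletons) and (b) the fact, implicit in the converse direction, that every non-singleton class of a \scc{} partition is a non-trivial category. If (b) is not isolated there, it is a one-line verification on the decomposition $p(a, A) = \omega(i, \pi(A))\,\sigma_i(a, A \cap X_i)$: for $S \subseteq X_i$ one has $p(a, S)/p(b, S) = \sigma_i(a, S)/\sigma_i(b, S)$, which does not involve items outside $X_i$, giving \cind{}; and for $x \notin X_i$ the value $p(x, S \cup E)$ depends on $S$ only through $\pi(S) = \{i\}$, hence equals $p(x, X_i \cup E)$, giving \cneu{}. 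Everything else is bookkeeping with the order $\succ$ on partitions.
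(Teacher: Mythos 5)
Your proof is correct and takes essentially the same route the paper intends: the paper states the corollary without a separate proof, as an immediate consequence of the constructive direction of Theorem~\ref{THEO: characterization resolvability} (a non-trivial category, padded with singletons, yields a partition in $\mathcal{P}$) combined with the uniqueness-of-coarsest-partition proposition and maximality. Your steps (i)--(iii), including the short verification that any non-singleton class of the coarsest partition satisfies \cind{} and \cneu{} and is therefore itself a non-trivial category, simply make that implicit argument explicit.
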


\section{A representation}	\label{SEC:deterministic resolvability}

Stochastic choices can represent aggregate choice frequencies in a population, when a single observation per menu is available for each individual. 
In this section, we single out the features of a population that generates a \scc. 
First, we introduce some useful definitions.\\

A (deterministic) choice function on a set $X$ is a map $c \colon \X \to X$ such that $\varnothing \neq c(A) \in A$, for all $A \in \X$.
We find that all \scc{} are generated by a probability distribution on a set of \textit{resolvable} choice functions, as modeled in \cite{Cantoneetal.2020}.\footnote{The original definition refers to choice correspondences. Here we report a formulation for choice functions.}

\begin{definition} \label{DEF: deterministic resolvable choice}
	A choice $c_X \colon \X \to X$ is \textbf{resolvable} with respect to a partition $\{X_i\}_{i \in I}$, induced by a surjective map $\pi \colon X \to I$, if there exist a choice on indices $c_I \colon \mathcal{I} \to I$, and a family of choices on classes $\{c_i\}_{i \in I}$, such that $\displaystyle c_X(A) = c_i(A \cap X_i)$, with $i = c_I(\pi(A))$.
\end{definition}	

Thus, given a partition indexed by $I$, for any menu $A$, first an index $i$ is picked according to $c_I \colon \I \to I$, such that 
$
c_I(\pi(A)) \ = \ \pi(c_X(A)); %\vspace{-6pt} 
$
then an alternative in $A \cap X_i$ is chosen by $c_i \colon \X_i \to X_i$.
No further form of consistency is required. 

For a partition $\{X_i\}_{i \in I}$, denote $\mathcal{R}$ the set of choices resolvable with respect to $\{X_i\}_{i \in I}$.
Each function in $\mathcal{R}$ is an instance of individual choice behavior. 
By defining a probability distribution $Q$ on $\mathcal{R}$, we can generate a stochastic choice representing the occurrence of each choice outcome. 
In particular, the following result holds.

\begin{restatable}[]{theorem}{resolvabilitypopulation} \label{THEO:resolvabilitypopulation}
%	Let $X$ be a finite , let $\mathcal{R}$ the set of deterministic choices resolvable with respect to .
	Let $(p, \{X_i\}_{i \in I})$ be non-degenerate \scc{} and $\mathcal{R}$ the set of choices resolvable with respect to $\{X_i\}_{i \in I}$. 
	There exists a probability distribution $Q$ on $\mathcal{R}$ such that:
	$$
	p(a, A) \ = \ \sum_{c_X : c_X(A) = a} Q(c_X).
	$$
\end{restatable}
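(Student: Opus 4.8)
The plan is to build the distribution $Q$ on $\mathcal{R}$ by exploiting the multiplicative decomposition $p(a,A)=\omega(i,\pi(A))\,\sigma_i(a,A\cap X_i)$ together with the product structure of $\mathcal{R}$. Observe that a resolvable choice is exactly a tuple $(c_I,\{c_i\}_{i\in I})$ where $c_I$ ranges over deterministic choices on $\I$ and each $c_i$ over deterministic choices on $\X_i$; so $\mathcal{R}$ is (in bijection with) the Cartesian product $\mathcal{R}_I\times\prod_{i\in I}\mathcal{R}_i$ of the corresponding sets of deterministic choice functions. The idea is to define $Q$ as a product measure $Q = Q_I\otimes\bigotimes_{i\in I}Q_i$, where $Q_I$ is a distribution on $\mathcal{R}_I$ generating $\omega$, and each $Q_i$ is a distribution on $\mathcal{R}_i$ generating $\sigma_i$. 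Under such a product $Q$, for a menu $A$ with $\pi(A)=T$ and $a\in A\cap X_i$,
\[
\sum_{c_X:\,c_X(A)=a} Q(c_X)
= \Bigl(\sum_{c_I:\,c_I(T)=i}Q_I(c_I)\Bigr)\Bigl(\sum_{c_i:\,c_i(A\cap X_i)=a}Q_i(c_i)\Bigr)\Bigl(\prod_{j\ne i}\sum_{c_j}Q_j(c_j)\Bigr),
\]
and the last product is $1$, so if $Q_I$ generates $\omega$ and $Q_i$ generates $\sigma_i$ the right-hand side is $\omega(i,T)\sigma_i(a,A\cap X_i)=p(a,A)$, as required. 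So the whole statement reduces to the following self-contained lemma: every positive stochastic choice on a finite ground set is the frequency distribution of some probability measure on the set of deterministic choice functions on that set. This is a classical fact — it is essentially the statement that every stochastic choice is a random choice function — and I would either cite it (e.g. via the Block–Marschak style construction, or the observation that stochastic choices form a polytope whose vertices are deterministic choices, so any point is a convex combination of them, i.e.\ the expectation of a random deterministic choice) or give a one-line argument: the map sending a distribution on deterministic choice functions to its induced stochastic choice is affine and surjective onto the simplex-product of stochastic choices, because it hits every vertex (a deterministic choice function is itself a stochastic choice). Apply this lemma once to $\omega$ on $I$ to get $Q_I$, and once to each $\sigma_i$ on $X_i$ to get $Q_i$, then take the product measure.

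The main obstacle is making the bijection $\mathcal{R}\cong\mathcal{R}_I\times\prod_i\mathcal{R}_i$ and the ensuing product-measure computation fully rigorous, in particular checking that the event $\{c_X:c_X(A)=a\}$ really does factor as stated. The point to be careful about is that $c_X(A)$ depends on $c_I$ only through its value $c_I(\pi(A))$ and on the $c_i$ only through $c_i(A\cap X_i)$ for the selected $i$ — and crucially it does not depend on $c_j$ for $j\ne c_I(\pi(A))$, which is why those factors sum to $1$. Once the factorization of the event is established, everything else is the routine product-measure bookkeeping sketched above, plus the invocation of the stochastic-choice-as-random-choice lemma. I would also remark that positivity of $p$, $\omega$, $\{\sigma_i\}$ is not actually needed for this direction but is inherited from the hypothesis that $(p,\{X_i\}_{i\in I})$ is \scc.
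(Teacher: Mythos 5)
Your proposal is correct and takes essentially the same route as the paper: both construct $Q$ as a product measure over the independent components $(c_I,\{c_i\}_{i\in I})$ of a resolvable choice and verify the representation through the factorization of the event $\{c_X \colon c_X(A)=a\}$ into $\{c_I(\pi(A))=i\}\cap\{c_i(A\cap X_i)=a\}$, with the remaining factors summing to $1$. The only difference is that where you invoke the general fact that any stochastic choice is a mixture of deterministic choice functions to obtain $Q_I$ and the $Q_i$, the paper instantiates it concretely by drawing the choice on each menu independently, setting $Q(c_X)=\prod_{J}\omega(c_I(J),J)\,\prod_{i\in I}\prod_{S\subseteq X_i}\sigma_i(c_i(S),S)$.
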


Thus, any \scc{} is generated by a population of individuals who share the same classification structure. 
However, the converse does not hold, as illustrated by the following counterexample.

\begin{counterexample} \label{EX:nonresolvable population}
	Let $X = \{a, b, c\}$ be the set of available items, and consider the set $\mathcal{R}$ of deterministic choices on $X$ resolvable with respect to partition $P = \{X_1, X_2\}$ with $X_1 = \{a, b\}$ and $X_2 = \{c\}$. 
	Each $c_X \in \mathcal{R}$ has components $c_I$, with $c_I(\{1, 2\}) \in \{1, 2\}$, and $c_1, c_2$, with $c_1(\{a, b\}) \in \{a, b\}$. 
	It follows that $\mathcal{R}$ counts four distinct choices. 
	The table below reports each choice (for each menu, the chosen item is underlined), and the probability assigned by a distribution $Q$ on $\mathcal{R}$.
	At the bottom of the table is reported the probability each item is chosen in each menu, that is, the stochastic choice on $X$ resulting from $Q$.% $p(x, A) \ = \ Q(\{c_X : c_X(A) = x\})$.
	
	\begin{center}
		\begin{tabular}{ c c c c c | c }
			&	&	$c_X$ &	 	 &	 	& $Q(c_X)$ \\
			& $a$ $b$ $\underline{c}$ & $a$ $\underline{c}$ & $\underline{a}$ $b$ & $b$ $\underline{c}$ & $\frac{1}{5}$ \\
			& $a$ $b$ $\underline{c}$ & $a$ $\underline{c}$ & $a$ $\underline{b}$ & $b$ $\underline{c}$ & $\frac{2}{5}$ \\
			& $\underline{a}$ $b$ $c$ & $\underline{a}$ $c$ & $\underline{a}$ $b$ & $\underline{b}$ $c$ & $\frac{1}{5}$ \\
			& $a$ $\underline{b}$ $c$ & $\underline{a}$ $c$ & $a$ $\underline{b}$ & $\underline{b}$ $c$ & $\frac{1}{5}$ \\
			\hline \hline
			$p(a, \cdot)$ & $\frac{1}{5}$ & $\frac{2}{5}$ & $\frac{2}{5}$ & $\diagup$ & \\ 
			$p(b, \cdot)$ & $\frac{1}{5}$ & $\diagup$ & $\frac{3}{5}$ & $\frac{2}{5}$ & \\ 
			$p(c, \cdot)$ & $\frac{3}{5}$ & $\frac{3}{5}$ & $\diagup$ & $\frac{3}{5}$ &   
		\end{tabular}
		\captionof{table}{Stochastic choice $p$ generated by $Q$ on $\mathcal{R}$.}
	\end{center}
	
	Notice that $\displaystyle \frac{p(a, ab)}{p(b, ab)}  =  \frac{2}{3} \ \neq \ \frac{p(a, abc)}{p(b, abc)}  =  1$.
	\bigskip
	This is a violation of \cind, entailing that $X_1 = \{a, b\}$ is not a category, and $p$ is not \scc{} with respect to partition $P$.
	Indeed, there exists no partition for which $p$ is non-degenerate \scc.
\end{counterexample}
To see why this violation occurs, consider what follows.
If $a, b$ and $c$ are available, only individuals with $c_I(12) = 1$ choose within $X_1 = \{a, b\}$, so that $\frac{p(a, abc)}{p(b, abc)} \ = \ 1$. 
However, when $c$ is excluded, individuals with $c_I(12) = 2$ necessarily deviate to $a$ or $b$, in a proportion that could modify the probability ratio \big($\frac{p(a, ab)}{p(b, ab)}  =  \frac{2}{3}$\big).
Hence, a probability distribution on the set of deterministic resolvable choices need not generate a stochastic choice with categorization.
%, and this proves Proposition~\ref{PROP: det-stoc resolvability}.

The additional property required for $Q$ to generate non-degenerate \scc{} ensures that violations such as those in Counterexample~\ref{EX:nonresolvable population} are forbidden. 
In particular, for any $c_X$ it must be
$$
\displaystyle Q(c_X) \ = \ Q(\{c^\prime_X \colon c_I^\prime = c_I\}) \prod_{i \in I} Q(\{c^\prime_X \colon c^\prime_i = c_i\}),
$$
which again requires a form of independence for the induced distribution on choices within classes. 
This entails that, in menus where additional classes are made available, agents change their choice while preserving probability ratios for the original items.

Finally, the following theorem characterizes stochastic choices generated by a population of deterministic resolvable choices.

\begin{restatable}[]{theorem}{population} \label{THEO:population}
	Let $p \colon X \times \X \to [0, 1]$ be a positive stochastic choice, let $\pi \colon X \to I$ induce a partition $\{X_i\}_{i \in I}$ of $X$, and let $\mathcal{R}$ collect deterministic choices resolvable with respect to $\{X_i\}_{i \in I}$.
	The following are equivalent:
	\begin{itemize}
		\item[(1)] $\text{for all } A, B \in \X \text{ s.t.\ } \pi(A)  =  \pi(B), \quad p(A \cap X_i, A) \ = \ p(B \cap X_i, B)$,
		\item[(2)] $p$ is represented by a probability distribution $Q$ on $\mathcal{R}$, %of deterministic choice functions resolvable w.r.t.\ $\{X_i\}_{i \in I}$,
		$$
		p(a, A) \ = \ \sum_{c_X : c_X(A) = a} Q(c_X).
		$$
	\end{itemize}
\end{restatable}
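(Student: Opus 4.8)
The plan is to prove the two implications separately. For $(2)\Rightarrow(1)$, suppose $p$ arises from a distribution $Q$ on $\mathcal{R}$. Fix $A,B$ with $\pi(A)=\pi(B)$, and fix a class index $i\in\pi(A)$. For any resolvable $c_X$ with components $(c_I,\{c_j\}_{j})$, we have $c_X(A)\in X_i$ if and only if $c_I(\pi(A))=i$, and likewise $c_X(B)\in X_i$ if and only if $c_I(\pi(B))=i$. Since $\pi(A)=\pi(B)$, these two events coincide as subsets of $\mathcal{R}$. Summing $Q$ over that common event gives $p(A\cap X_i,A)=Q(\{c_X:c_I(\pi(A))=i\})=p(B\cap X_i,B)$, which is exactly condition~(1). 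This direction is short and essentially bookkeeping.

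The substance is $(1)\Rightarrow(2)$. I would build $Q$ as the pushforward of a product of independent distributions, one on base choices and one on each fiber. Concretely: condition~(1) says that for each $i$ the quantity $p(A\cap X_i,A)$ depends on $A$ only through $\pi(A)$, so it defines a well-posed stochastic choice $\omega$ on indices via $\omega(i,\mathcal{J}):=p(A\cap X_i,A)$ for any $A$ with $\pi(A)=\mathcal{J}$; summing over $i\in\mathcal{J}$ gives $1$, so $\omega$ is a genuine stochastic choice on $I$. Likewise, restricting $p$ to subsets of a single class $X_i$ yields a stochastic choice $\sigma_i$ on $X_i$ with $\sigma_i(a,S)=p(a,S)$ for $S\in\X_i$. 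The next step is to realize each of $\omega$ and each $\sigma_i$ by a distribution over deterministic choices: this is the classical Block--Marschak / Falmagne fact that \emph{every} stochastic choice on a finite set is a mixture of deterministic choice functions (no regularity needed), applied on the small ground sets $I$ and $X_i$. Let $Q_I$ be such a distribution on $\{c_I:\mathcal{I}\to I\}$ representing $\omega$, and $Q_i$ one on $\{c_i:\X_i\to X_i\}$ representing $\sigma_i$. Define $Q$ on $\mathcal{R}$ as the product measure $Q_I\otimes\bigotimes_{i\in I}Q_i$ under the identification of a resolvable $c_X$ with its component tuple $(c_I,\{c_i\}_i)$ (this identification is a bijection $\mathcal{R}\cong(\text{base choices})\times\prod_i(\text{fiber choices})$, which is immediate from Definition~\ref{DEF: deterministic resolvable choice}).

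It then remains to verify the representation identity. Fix $A\in\X$ and $a\in A$ with $a\in X_i$. Under $Q$, the events $\{c_I(\pi(A))=i\}$ and $\{c_i(A\cap X_i)=a\}$ are independent (they involve disjoint blocks of the product), and $c_X(A)=a$ holds exactly when both occur; hence
\begin{equation*}
\sum_{c_X:c_X(A)=a}Q(c_X)\;=\;Q_I\big(\{c_I:c_I(\pi(A))=i\}\big)\cdot Q_i\big(\{c_i:c_i(A\cap X_i)=a\}\big)\;=\;\omega(i,\pi(A))\,\sigma_i(a,A\cap X_i).
\end{equation*}
By construction $\omega(i,\pi(A))=p(A\cap X_i,A)$ and $\sigma_i(a,A\cap X_i)=p(a,A\cap X_i)$, so I must finally check that $p(a,A)=p(A\cap X_i,A)\cdot p(a,A\cap X_i)$ for the given $p$. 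This last point is where condition~(1) does its real work together with positivity: I would argue that $(1)$ forces $p$ to be \scc{} with respect to $\{X_i\}_{i\in I}$ (the decomposition of Definition~\ref{DEF: resolvable functions} with these very $\omega$ and $\sigma_i$), essentially by the same conditional-probability reasoning used after Definition~\ref{DEF: resolvable functions}; alternatively, one massages $(1)$ directly to obtain the product formula. I expect this verification — reconciling the abstractly constructed $Q$ with the pointwise values of $p$, i.e.\ showing $(1)$ already encodes the full multiplicative structure — to be the main obstacle, since $(1)$ only constrains the aggregate class-probabilities and one must be careful that no additional within-class or cross-class condition is secretly needed. Positivity of $p$ is used to ensure all the conditionals and the choices $\omega,\sigma_i$ are well defined.
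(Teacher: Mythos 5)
Your $(2)\Rightarrow(1)$ direction is correct and is essentially the paper's own argument. The gap is in $(1)\Rightarrow(2)$, and it sits exactly at the step you flagged as the ``main obstacle''. Taking $Q=Q_I\otimes\bigotimes_{i\in I}Q_i$ makes the base and fiber coordinates independent, so the stochastic choice it induces is $\omega(i,\pi(A))\,\sigma_i(a,A\cap X_i)=p(A\cap X_i,A)\,p(a,A\cap X_i)$, i.e.\ an \scc{} with respect to $\{X_i\}_{i\in I}$. Your remaining task, checking $p(a,A)=p(A\cap X_i,A)\,p(a,A\cap X_i)$, is not a verification that can succeed: condition (1) does \emph{not} force $p$ to be \scc. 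The paper's Example~\ref{EX:nonresolvable population} is a direct counterexample to your fallback claim: that choice satisfies (1) for the partition $\{\{a,b\},\{c\}\}$ (class totals are $\frac{2}{5}$ and $\frac{3}{5}$ in every menu meeting both classes), yet \cind{} fails and $p(a,abc)=\frac{1}{5}\neq\frac{2}{5}\cdot\frac{2}{5}=p(ab,abc)\,p(a,ab)$. So the $Q$ you construct represents a different stochastic choice than the given $p$; the obstruction is the independence built into the construction itself, not the bookkeeping at the end.

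For comparison, the paper does not take mixtures representing $\omega$ and the $\sigma_i$'s: it defines $\omega(i,J)=p(A\cap X_i,A)$ (well-posed by (1)) and then assembles $Q(c_X)$ from the menu-wise conditional probabilities $p(c_X(B),B\mid B\cap X_{\pi(c_X(B))})$, which lets the within-class behaviour, conditional on class $i$ being selected, depend on the whole menu --- precisely the base--fiber correlation your product measure rules out. Two further remarks. The fact you need (every stochastic choice on a finite set is a mixture of arbitrary deterministic choice functions) is elementary (e.g.\ independent draws across menus) and is not the Block--Marschak/Falmagne theorem, which concerns random-utility rationalizability. More importantly, your worry that (1) ``only constrains the aggregate class-probabilities'' is justified in a stronger sense: for any $Q$ on $\mathcal{R}$ and $a\in X_i$, the event $\{c_X\colon c_X(A)=a\}$ equals $\{c_X\colon c_I(\pi(A))=i \text{ and } c_i(A\cap X_i)=a\}$, so any represented $p(a,A)$ can depend on $A$ only through $\pi(A)$ and $A\cap X_i$; whenever two distinct classes contain at least two items each, this requirement is strictly stronger than (1), so establishing $(1)\Rightarrow(2)$ requires addressing this point explicitly --- something the paper's own, rather terse, construction also leaves implicit.
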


A proof is provided in Appendix~\ref{SEC:a population}. 
Here, we point out that condition \textit{(1)} is implied by \cneu{} applied to each set in $\{X_i\}_{i \in I}$. 
Expectedly, \scc{} is a subset of the set of stochastic choices generated by a population of deterministic resolvable choices. 
More interestingly, the two-step procedure in deterministic choice emerges in \cneu{} in the induced stochastic choice.

%%%%%%%%%%%%%%%%%%%%%%%%%%%%%%%%%%%%%%%%%%%%
%%%%%%%%%%%%%%%%%%%%%%%%%%%%%%%%%%%%%%%%%%%%

\section{A generalization} \label{SEC:Generalization}

\subsection{Stochastic choice with weak categorization}

For a \scc{}, the choice among classes is represented by a menu-independent stochastic choice, $\omega$ (Definition \ref{DEF: resolvable functions}). 
In this section, we lift this restriction and 
introduce choices \emph{with weak categorization} (\scwc).

\begin{definition} \label{DEF:weakcategorization}
	A positive stochastic choice $p \colon X \times \X \to [0,1]$ is \textbf{with weak categorization} (\scwc) if there exist
	\begin{itemize}
		\item a partition of $X$ in subsets $\{X_i\}_{i \in I}$, induced by a surjective map $\pi \colon X \to I$;
		\item a family $\{\omega_A\}_{A \in \X}$ of functions $\omega_A(\cdot, \pi(A)) \colon I \to [0,1]$ such that
		\begin{itemize}
			\item $\omega_A(i, \pi(A)) = 0 \ \Leftrightarrow i \notin \pi(A)$, 
			\item $\displaystyle \ \sum_{i \in \pi(A)} \omega_A(i, \pi(A)) \ = \ 1$;
		\end{itemize}
		\item a family $\{\sigma_i\}_{i \in I}$ of positive choices $\sigma_i \colon X_i \times \X_i \to [0,1]$; 
	\end{itemize}
	such that for all $A \in \X$, $i \in \pi(A)$, $a \in A \cap X_i$,
	$$ 
	p(a, A) = \ \omega_A(i, \pi(A)) \ \sigma_i(a, A \cap X_i). 
	$$ 
	A \scwc{} $(p, \{X_i\}_{i \in I})$ is non-degenerate if $1 < |I| < n$.
\end{definition}

Clearly, \scc{} is obtained as a special case under the restriction that $\omega_A(i, \pi(A)) \ = \ \omega_B(i, \pi(B))$ for all $A, B \in \X$ such that $\pi(A) = \pi(B)$.  
As for \scc, a stochastic choice can be with weak categorization with respect to multiple partitions. 
Nonetheless, in virtue of the greater flexibility of the model, multiple coarsest partitions can be outlined. 
%\textbf{comment on similarity}

Further, by employing a family of choices $\{\omega_A\}_{A \in \X}$, each indexed by a menu, Definition~\ref{DEF:weakcategorization} allows displays of erratic behavior in choices \textit{among} and \textit{within} classes, as long as items are consistently classified and conditional choices $\{\sigma_i\}_{i \in I}$ are fixed. 
This feature informs the structure of the model. 
Instances of menu-dependence in the choice of a class include several well-documented phenomena, such as choice overload \citep{Chernevetal2015, IyengarLepper2000}, associationistic salience \citep{LiTang2016} and reference dependence \citep{Barbos2010}.

Similar to \scc, stochastic choices with weak categorization can be modeled using a decision tree. 
%A key distinction, however, is that an alternative can be present in multiple nodes, each reachable from a distinct (class) node. 
Suitable specifications of choices $\omega_A$ can accommodate for sophistication in the choice of a class. 
For instance, in Section \ref{SEC:Relation to literature scc}, we show that Nested Logit is a special case of \scwc.

\subsection{A characterization of \scwc}

To assess whether $p$ is \scwc{}, we investigate the existence of a \textit{weak category}, a generalization of Definition~\ref{DEF:fun-proper category}.
%Here, we interpret such a set of items as follows.

\begin{definition} \label{DEF:general category}
	A \textbf{weak category} is a set $W \in \X$ such that the following holds:\\ 
	
	\cind: for all $S \subseteq W$, $a, b \in S$, $E \subseteq X \setminus W$,
	\begin{equation*} %\label{eq-category}
		\displaystyle \frac{p(a, S)}{p(b, S)} \ = \ \frac{p(a, S \cup E)}{p(b, S \cup E)}.
	\end{equation*}
	A weak category $W$ is \emph{non-trivial} if $1 < |W| < |X|$.
\end{definition}

A set of items constitutes a \emph{weak category} if the ratios of choice probabilities are not altered by the introduction of external items. 
This framework acknowledges that choices may violate \cneu{} in response to contextual changes, possibly conveyed by a change in the number or identity of available items. 
For instance, a decision-maker observing a wide selection of mirrorless cameras might infer that the retailer specializes in this category, and conclude that high-quality options are available. 
This, in turn, increases the likelihood of choosing within this category. 

This approach allows more flexible categorical structures: weak categories are not required to be disjoint or nested, and the intersection of any two weak categories is also a weak category (see Appendix~\ref{SEC: categorical structures} for a proof). 
The union of any two weak categories is not a category, in general.

\begin{restatable}[]{prop}{weakcategoriesintersection}
	If $W$ and $T$ are weak categories and $W \cap T \neq \varnothing$, then $W \cap T$ is a weak category.
\end{restatable}

Classification structures in this case admit a specific form of non-transitive similarity between items: it may be the case that there is some item $a \in W \cap T$, so that $a$ is sufficiently similar to item $b \in W$ and $c \in T$, however $b$ and $c$ are not similar to the same degree.
This is forbidden for categories as in Definition~\ref{DEF:fun-proper category}.

The existence of a weak category characterizes \scwc{}.

\begin{restatable}[]{theorem}{generalresolvability} \label{THEO: characterization general resolvability}
	A positive stochastic choice $p \colon X \times \X \to [0, 1]$ is non-degenerate \scwc{} if and only if there exists a non-trivial weak category in $X$.
\end{restatable}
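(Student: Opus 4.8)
The plan is to prove both directions constructively, following the template of the proof of Theorem~\ref{THEO: characterization resolvability} but observing that, since $\omega$ is now allowed to depend on the menu, only \cind{} will be needed --- the role that \cneu{} played there disappears.

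For the \emph{only if} direction I would start from a non-degenerate \scwc{} $(p, \{X_i\}_{i \in I})$ with data $\{\omega_A\}_{A \in \X}$ and $\{\sigma_i\}_{i \in I}$. Because $|I| < n$, not all classes are singletons, so some class $X_i$ has at least two elements; and because $|I| > 1$, every class is a proper subset of $X$, so this $X_i$ satisfies $1 < |X_i| < |X|$. I would show such an $X_i$ is a non-trivial weak category. The verification of \cind{} is a one-line cancellation: for $S \subseteq X_i$, $a,b\in S$ and $E \subseteq X \setminus X_i$ one has $(S \cup E) \cap X_i = S$, so $p(a, S \cup E) = \omega_{S \cup E}(i, \pi(S \cup E))\,\sigma_i(a, S)$ and likewise for $b$; the common factor cancels from $p(a, S \cup E)/p(b, S \cup E)$, which therefore equals $\sigma_i(a,S)/\sigma_i(b,S)$ independently of $E$. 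Comparing with $E = \varnothing$ gives \cind.

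For the \emph{if} direction, given a non-trivial weak category $G$, I would use the partition with blocks $G$ and $\{x\}$ for $x \in X \setminus G$; since $2 \leqslant |G| < n$ it has between $2$ and $n-1$ blocks, so the \scwc{} constructed below is non-degenerate. I would set $\sigma_1(a,S) := p(a,S)$ for nonempty $S \subseteq G$ (a positive stochastic choice on $G$, since $A \cap G$ ranges over all nonempty subsets of $G$), take the fiber choices on singletons to be trivial, and define the menu-weights by $\omega_A(\{x\}, \pi(A)) := p(x,A)$ for $x \in A \setminus G$, by $\omega_A(1, \pi(A)) := p(a,A)/p(a, A \cap G)$ for an arbitrary $a \in A \cap G$ when $A \cap G \neq \varnothing$, and by $0$ off $\pi(A)$. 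The single nontrivial point --- and the place \cind{} is used --- is that the last quantity does not depend on the representative $a$: applying \cind{} with $S = A \cap G$ and $E = A \setminus G$ gives exactly $p(a,A)/p(a, A\cap G) = p(b,A)/p(b, A\cap G)$. What then remains is bookkeeping: positivity of $p$ makes $\omega_A(i, \pi(A))$ strictly positive precisely on $\pi(A)$; summing $p(a,A) = \omega_A(1,\pi(A))\,p(a, A \cap G)$ over $a \in A \cap G$ shows $\omega_A(1, \pi(A)) = p(A \cap G, A)$, so the weights sum to $p(A, A) = 1$; and the defining identity $p(a,A) = \omega_A(\pi(a), \pi(A))\,\sigma_{\pi(a)}(a, A \cap X_{\pi(a)})$ holds by construction on both kinds of block. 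Hence $(p, \{X_i\})$ is a non-degenerate \scwc{}.

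I expect the only genuine obstacle to be organizational rather than mathematical: checking that the constructed family $\{\omega_A\}$ meets the two bullet conditions of Definition~\ref{DEF:generallyresolvable} and that the cardinality bounds for non-degeneracy hold on both sides. Crucially, there is no counterpart here to the delicate cross-menu consistency argument that \cneu{} supplied in Theorem~\ref{THEO: characterization resolvability}, precisely because menu-dependent weights impose no constraint across menus --- which is exactly what makes the present model strictly more permissive.
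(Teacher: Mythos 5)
Your proposal is correct and follows essentially the same route as the paper's proof: the only-if direction is the same cancellation of the menu-dependent weight showing \cind{} for any class with at least two elements, and the if direction defines $\sigma$ as the restriction of $p$ to $G$ and recovers $\omega_A(1,\pi(A)) = p(A \cap G, A)$ from the common ratio that \cind{} guarantees, exactly as in Appendix~\ref{PROOF: weak categorization}. The only (harmless) difference is that you complete the partition with singletons alone, while the paper greedily extracts further weak categories from $X \setminus G$ before adding singletons; both choices yield a non-degenerate \scwc{}, and yours is slightly leaner.
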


Disjoint weak categories in $X$ are classes in a partition for which $p$ is \scwc, \cind{} fixes choices $\{\sigma_i\}_{i \in I}$, choices $\{\omega_{A}\}_{A \in \X}$ are then easily retrieved (see proof in Appendix~\ref{PROOF: weak categorization}). %\\ 

%%%%%%%%%%%%%%%%%%%%%%%%%%%%%%%%%%%%%%%%%%%%
%%%%%%%%%%%%%%%%%%%%%%%%%%%%%%%%%%%%%%%%%%%%

\begin{example}
	Suppose an agent is only able to compare and rank similar items. 
	Let $G$ be such a subset of comparable items in $X$, which can be ranked in a linear order.
	In particular, when choosing within $G$, one linear order in $\mathcal{L}_G$ realizes with a certain probability, and is applied by the agent. 
	%Note that a linear order on $G$ is a partial order on $X$. 
	In virtue of this procedure, choice restricted to $G$ has a random utility representation. 
	We say that a stochastic choice $p \colon X \times \X \to [0,1]$ is \emph{locally rationalizable} at $G \subseteq X$ if there exists a probability distribution $q$ on the set of linear orders $\mathcal{L}_G$ on $G$ such that, for all $x \in G$, $A \subseteq X$, 
	$$
	p(x, A) \ = \ p(G \cap A, A) \ \ q(\{\ell_G \in \mathcal{L}_G \colon \max(A \cap G, \ell_G) = x\}).
	$$  
	It is easy to prove the following result (see Appendix \ref{local rationalizability}). 
	
	\begin{restatable}[]{prop}{localrationality}
		If $p$ is locally rationalizable at $G$ such that $1 < |G| < |X|$, then $G$ is a non-trivial weak category.
	\end{restatable}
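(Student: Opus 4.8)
The plan is to verify directly that local rationalizability at $G$ forces the \cind{} condition in Definition~\ref{DEF:general category}; non-triviality of $G$ as a weak category is immediate from the hypothesis $1 < |G| < |X|$, and $G \in \X$ since $|G| > 1$.

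First I would fix $S \subseteq G$ with $a, b \in S$ and $E \subseteq X \setminus G$, and record the elementary set identity $(S \cup E) \cap G = S$, which holds because $S \subseteq G$ and $E \cap G = \varnothing$. Consequently, for every linear order $\ell \in \mathcal{L}_G$ we have $\max\big((S \cup E) \cap G, \ell\big) = \max(S, \ell)$, and likewise $\max(S \cap G, \ell) = \max(S, \ell)$ since $S \cap G = S$. These identities are the whole geometric content of the argument: external items in $E$ never enter the comparison carried out by the orders in $\mathcal{L}_G$.

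Next I would apply Definition~\ref{DEF:localrationalizability} twice, once with $A = S$ and once with $A = S \cup E$. For $A = S$ the factor $p(G \cap A, A) = p(S, S) = 1$, so $p(x, S) = q(\{\ell \in \mathcal{L}_G \colon \max(S, \ell) = x\})$ for each $x \in \{a, b\}$. For $A = S \cup E$ the factor $p(G \cap A, A) = p(S, S \cup E)$ is the same for $a$ and $b$, and by the identity above $p(x, S \cup E) = p(S, S \cup E)\, q(\{\ell \colon \max(S, \ell) = x\})$. Forming the ratios for $a$ over $b$ in each case, the common factor $p(S, S \cup E)$ cancels in the second, and both ratios collapse to $q(\{\ell \colon \max(S,\ell) = a\}) / q(\{\ell \colon \max(S,\ell) = b\})$; hence $\frac{p(a,S)}{p(b,S)} = \frac{p(a, S\cup E)}{p(b, S\cup E)}$, which is precisely \cind{} for $G$.

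The only point needing a word of care is the well-definedness of these ratios: since the stochastic choices under consideration are positive, $p(b, S) > 0$, $p(b, S \cup E) > 0$ and $p(S, S \cup E) > 0$, and from $p(b, S) = q(\{\ell \colon \max(S,\ell) = b\})$ the $q$-side denominator is strictly positive as well. There is no genuine obstacle here; the proof is essentially bookkeeping that isolates the ``choose within $G$'' factor $q(\cdot)$ from the ``choose $G$'' factor $p(G \cap A, A)$, the latter being identical for $a$ and $b$ and therefore irrelevant to the probability ratio.
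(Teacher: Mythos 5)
Your proposal is correct and follows essentially the same route as the paper's proof: apply the local-rationalizability formula to the menus $S$ and $S\cup E$ (using $(S\cup E)\cap G = S$ and $p(S,S)=1$), and observe that the common factor $p(S, S\cup E)$ cancels in the probability ratio, yielding \cind{} for $G$. Your explicit remarks on the set identity and on positivity of the denominators are just a more careful spelling-out of the same bookkeeping.
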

	\noindent It follows from Theorem~\ref{THEO: characterization general resolvability} that a locally rationalizable $p$ is a non-degenerate \scwc. 
\end{example}

%%%%%%%%%%%%%%%%%%%%%%%%%%%%%%%%%%%%%%%%%%%%
%%%%%%%%%%%%%%%%%%%%%%%%%%%%%%%%%%%%%%%%%%%%

\section{Related literature} \label{SEC:Relation to literature scc}

In a deterministic setting, sequential procedures have been modeled as a result of choice over menus in anticipation of future preferences \citep{Kreps1979, GulPesendorfer2001, Heydari2020}, bounded rationality \citep{Masatlioglu2016, LlerasMasatliogluNakajima2017, Fricketal2019}, preliminary screening steps \citep{ManziniMariotti2007, ApesteguiaBallester2013, Tyson2013}), or categorization \citep{Barbos2010, ManziniMariotti2012, Aguiar2017}. 
In all these instances, the agent focuses on a subset of available items, and picks a selection within it.
This work is especially close in spirit to \cite{Cantoneetal.2020}, recently extended to a probabilistic setting in \cite{CarpentiereDoignon2025}.  
The latter, developed independently, generalizes our definition and characterization of \scc{} to the broader framework of stochastic choice correspondences. 
To the best of our knowledge, our considerations on partition structures and on the generating population of a \scc{}, as well as the definition and characterization of \scwc{}, have not been previously addressed in the literature. 
Hereafter, we contrast our models to Luce's model and Random Utility models to clarify the implications of our axioms, before briefly delving into other related models. 

\subsection{On Luce's IIA}

We start by noting that non-degenerate \scc are \textit{not} Luce model choices, defined as follows \citep{Luce1959}.

\begin{definition}
	A positive stochastic choice $p \colon X \times \X \to [0,1]$ is Luce model if there exists a utility function $u \colon X \to \mathbb{R}_{++}$ such that $p(a, A) = \frac{u(a)}{\sum_{x \in B} u(x)}$, for all $A \in \X$.
\end{definition}

Now let $(p, \{X_i\}_{i \in I})$ be a non-degenerate \scc{}. 
Without loss of generality, consider a set $A = \{a, b, c\} \subseteq X$ such that $a, b \in X_i$ and $c \in X_j$ for some $i, j \in I$. 
We check the existence of a utility function $u \colon X \to \mathbb{R}_{++}$ such that $p(a, B) = \frac{u(a)}{\sum_{x \in B} u(x)}$, $B \subseteq A$.
Since $(p, \{X_i\}_{i \in I})$ is \scc, $p(a, ac) \ = \ p(b, bc) \ = \ p(ab, abc) \ = \ \omega(i, ij)$. 
Thus, it must be 
$$
p(a, ac) = p(ab, abc) \ \Leftrightarrow \ \frac{u(a)}{u(a) + u(c)} =  \frac{u(a) + u(b)}{u(a) + u(b) + u(c)} \ \Leftrightarrow \ u(b) u(c)= 0,
$$
which contradicts the assumption of strict positivity for $u$. 
Since $X_i$ is a category, we can turn to Definition \ref{DEF:fun-proper category} and see that this contradiction is determined by \cneu. 
Luce model is famously characterized by \IIA. 

\begin{definition}
	A positive stochastic choice $p$ satisfies \IIA{} if $\dfrac{p(a, ab)}{p(b, ab)} = \dfrac{p(a, A)}{p(b, A)}$ for all $a, b \in X$ and $A \in \X$. 
\end{definition}

Clearly, \IIA{} is equivalent to imposing that any pair $a,b$ satisfies \cind, and is therefore a weak category.

Stochastic choice with categorization has an intersection with Nested Stochastic Choice.
We report the definition as introduced in \cite{Kovach2022}.

%Let $X$ be the set of alternatives and $p \colon X \times \X \to [0,1]$ a stochastic choice function on $X$.

\begin{definition}[\textsf{NSC}] \label{def-NSC}
	A stochastic choice $p \colon X \times \X \to [0,1]$ is a \textit{nested stochastic choice} (\nsc) if there are a partition $\{X_i\}_{i \in I}$, a utility function $u \colon X \to \mathbb{R}_{++}$, and a nest utility function $v \colon \bigcup_{i \in I} 2^{X_i} \to \mathbb{R}_+$ with $v(\varnothing) = 0$ such that, for any $A \in \X$ and $a \in A \cap X_i$, 
	$$
	\displaystyle p(a, A) \ = \ \frac{v(A \cap X_i)}{\displaystyle \sum_{j \in I} v(A \cap X_j)} \ \frac{u(a)}{\displaystyle \sum_{b \in A \cap X_i} u(b)}.
	$$
\end{definition}

%Each set $X_i$ in the partition is a nest.
%The map $v$ assigns a utility to each nest, while $u$ is a utility function defined on items. 
Thus, item $a \in A \cap X_i$ is chosen within $A$ if \textit{(i).}\ nest $A \cap X_i$ is chosen in $A$, with probability $\frac{v(A \cap X_i)}{\sum_{j \in I} v(A \cap X_j)}$; \textit{(ii).}\ $a$ is chosen in $A \cap X_i$, with probability $\frac{u(a)}{\sum_{b \in A \cap X_i} u(b)}$.
%This property characterizes for \nsc.
%
A \nsc{} $(p, \{X_i\}_{i \in I})$ is a \scc{} only if, for all $ A, B \in \X \text{ s.t.\ } \pi(A) = \pi(B), \text{ for all } i \in I, \ \ v(A \cap X_i) \ = \ v(B \cap X_i)$. 
This condition allows $\omega(i, \pi(A)) = \frac{v(A \cap X_i)}{\sum_{j \in I} v(A \cap X_j)}$ to be well-defined.
Nested Logit is a special case of \nsc{} (see \cite{Kovach2022}), such that, for all $A \in \X$, $v(A \cap X_i) \ = \ \big( \sum_{a \in A \cap X_i} u(a)\big)^{\lambda_i}$, for a set of strictly positive real numbers $\{\lambda_i\}_{i \in I}$. 
It is easily shown that there must be some $A, B \in \X \text{ s.t. } \pi(A) = \pi(B)$ such that $\big( \sum_{a \in A \cap X_i} u(a)\big)^{\lambda_i} \ \neq \ \big( \sum_{b \in B \cap X_i} u(b)\big)^{\lambda_i}$, so that Nested Logit is an instance of \nsc{} which is not non-degenerate \scc.  
Conversely, any nest is a \textit{weak category}, and the link to \scwc{} can be made apparent by %defining a utility function $u \colon X \to \mathbb{R}_{++}$, and a nest utility function $v \colon \bigcup_{i \in I} 2^{X_i} \to \mathbb{R}_+$ with $v(\varnothing) = 0$, and 
letting, for all $A \in \X, \ a \in A \cap X_i$,
$$
\omega_A(i, \pi(A)) = \frac{v(A \cap X_i)}{\sum_{i = 1}^K v(A \cap X_j)} \quad \text{ and } \quad \sigma_i(a, A \cap X_i) = \frac{u(a)}{\sum_{b \in A \cap X_i} u(b)}. 
$$ 
It follows naturally that Nested Logit is a \scwc. 

\begin{remark}[On \IIA{} and its weaker forms] 
	\cite{Luce1959} introduces \IIA{} as a conventional form of rationality: additional items are irrelevant for assessing choice probabilities, and therefore relative preference, for any pair $a$ and $b$. 
	\cite{Debreu1960}'s example shows how the axiom is too stringent, as it prevents common forms of substitutability.
	To accommodate for that, the characterization of \nsc{} relies on a notion of revealed similarity, defined as follows: $a \sim_p b$ if and only if $\dfrac{p(a,ab)}{p(b, ab)} = \dfrac{p(a,A)}{p(b, A)}$ for every $A$. 
	If this condition is not satisfied, $a$ and $b$ are revealed dissimilar, denoted $a \nsim_p b$. 
	The characterizing axiom, \textsc{ISA}, guarantees that $\sim_p$ is an equivalence relation, which therefore allows the partition of $X$ in nests: 
	$$
	a \sim_p x \text{ and } b \sim_p x \quad \text{ or } \quad a \nsim_p x \text{ and } b \nsim_p x \qquad \Rightarrow \qquad \dfrac{p(a,A)}{p(b, A)} = \dfrac{p(a,A \cup x)}{p(b, A \cup x)}. 
	$$
	Violations of \IIA{} are hence allowed, only to reveal substitutability patterns. 
	In particular, if \IIA{} holds for a pair of options, it may be an expression of rationality for items in the same nest, or an exhibition of incomparability for items in different nests. 
	Our notion of \cind{} further weakens \IIA, and is based solely on incomparability. 
	The extreme case in which all items are incomparable, that is, any pair is a weak category, artificially yields rationality (\IIA) as a by-product.  
	On the other hand, \cneu{} unequivocally pins down substitutability structures, and has no further ties with rational behavior. 
	This is indeed further acknowledged by our representation result (Theorem \ref{THEO:resolvabilitypopulation}). 
	This observation further suggests that to effectively disentangle substitutability in choice objects from rationality in decision-making, it is useful to explicitly formalize the initial stage of the two-step procedure.
\end{remark}

\subsection{On Random Utility models}

Next, we focus on the relation of \scc{} to the family of random utility models (\rum). 
A stochastic choice $p$ on $X$ admits a random utility representation \citep{BlockMarshack1960, Falmagne1978} if there exists a probability distribution $\mathbb{P}$ on the set $\mathcal{L}$ of linear orders\footnote{A linear order is a transitive, antisymmetric and complete (hence reflexive) binary relation.} on $X$, such that
%The probability distribution $\P$ over rankings in $\mathcal{L}$ is consistent with the stochastic choice $p$ if 
\begin{itemize}
	\item $\mathbb{P}(L) \geqslant 0$ for all $L \in \mathcal{L}$;
	\item $\displaystyle \sum_{\mathcal{L}_{(a, A)}} \ \mathbb{P}(L) = p(a, A), \ \text{ with } \mathcal{L}_{(a, A)} \ = \ \{L \in \mathcal{L} \colon  a  L  b, \ \forall b \in A \setminus \{a\}\} $.
\end{itemize} 

If this is the case, we say that $p$ is rationalizable by \rum, or simply \rum.

Stochastic choices with categorization need not admit a random utility representation. 
Yet, rationalizability of the observable \scc{} can be linked to rationalizability of its components. 
The following result can be derived as a special case of Theorem 7.4 in \cite{CarpentiereDoignon2025}. 
Our proof is original.  
%We write $(p, \{X_i\}_{i \in I})$ to denote a stochastic choice defined on $X$, \scc{} with respect to partition $\{X_i\}_{i \in I}$.

\begin{restatable}[]{theorem}{RUMstability} \label{THEO: RUM}
	A stochastic choice with categorization $(p, \{X_i\}_{i \in I})$ is \rum{} if and only if choices $\{\sigma_i \colon X_i \times \X_i \to [0, 1]\}_{i \in I}$ and $\omega \colon I \times \I \to [0, 1]$ are \rum{}.
\end{restatable}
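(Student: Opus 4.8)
The plan is to move random--utility representations back and forth between linear orders on $X$ and linear orders on $I$ and on the classes $X_i$. \emph{For necessity}, suppose $p$ is rationalized by a distribution $\mathbb{P}$ on $\mathcal{L}$. Fixing $i\in I$, I would push $\mathbb{P}$ forward along the restriction map $L\mapsto L|_{X_i}$ to a distribution $\mathbb{P}_i$ on the linear orders of $X_i$; since for $S\subseteq X_i$ being the $L|_{X_i}$-maximum of $S$ coincides with being the $L$-maximum of $S$, the event ``$a$ maximal in $S$'' receives mass $p(a,S)=\sigma_i(a,S)$, so $\mathbb{P}_i$ rationalizes $\sigma_i$. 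For $\omega$, I would fix one representative $x_i\in X_i$ per class, observe that $A_J:=\{x_j:j\in J\}$ satisfies $\pi(A_J)=J$ for every $J\in 2^I\setminus\{\varnothing\}$, and push $\mathbb{P}$ forward along $L\mapsto L|_{\{x_j:j\in I\}}$ (identifying $x_j$ with $j$) to a distribution $\mathbb{P}_\omega$ on linear orders of $I$; since $p(x_i,A_J)=\omega(i,J)\,\sigma_i(x_i,\{x_i\})=\omega(i,J)$, this $\mathbb{P}_\omega$ rationalizes $\omega$. (The \rum{} definition is read with the obvious meaning for stochastic choices on the finite sets $X_i$ and $I$.)

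\emph{For sufficiency}, I would start from any representations $\mathbb{P}_\omega$ of $\omega$ and $\mathbb{P}_i$ of each $\sigma_i$ and glue them by a ``lexicographic'' composition map $\Lambda$: given a linear order $L_I$ on $I$ and linear orders $L_i$ on the $X_i$, let $\Lambda(L_I,(L_i)_{i\in I})$ rank $a$ above $b$ iff either $\pi(a)\neq\pi(b)$ and $\pi(a)$ is above $\pi(b)$ in $L_I$, or $\pi(a)=\pi(b)$ and $a$ is above $b$ in $L_{\pi(a)}$. First I would check that this is a linear order on $X$: totality and antisymmetry are immediate, and transitivity follows from a short case split on whether the three elements share classes, the only potentially problematic configuration $\pi(a)=\pi(c)\neq\pi(b)$ being impossible since it would require both $\pi(a)L_I\pi(b)$ and $\pi(b)L_I\pi(a)$. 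Then I would record the key fact that, for any menu $A$ with $i^\ast$ the $L_I$-maximum of $\pi(A)$, the $\Lambda(\cdot)$-maximum of $A$ equals the $L_{i^\ast}$-maximum of $A\cap X_{i^\ast}$, because under $\Lambda(\cdot)$ every element of $A$ in $X_{i^\ast}$ beats every element of $A$ in another class. Finally I would take $\mathbb{P}$ to be the pushforward of the product measure $\mathbb{P}_\omega\otimes\bigotimes_{i\in I}\mathbb{P}_i$ under $\Lambda$: by the key fact, for $a\in A\cap X_i$ the fibre $\Lambda^{-1}(\mathcal{L}_{(a,A)})$ is exactly the product of the events ``$i$ is $L_I$-maximal in $\pi(A)$'', ``$a$ is $L_i$-maximal in $A\cap X_i$'', and the whole space in every other coordinate, so by independence
$$
\mathbb{P}(\mathcal{L}_{(a,A)}) \ = \ \mathbb{P}_\omega(\mathcal{L}_{(i,\pi(A))})\,\mathbb{P}_i(\mathcal{L}_{(a,A\cap X_i)}) \ = \ \omega(i,\pi(A))\,\sigma_i(a,A\cap X_i) \ = \ p(a,A),
$$
i.e.\ $\mathbb{P}$ rationalizes $p$.

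The main obstacle is the sufficiency direction: the content is in choosing the right way to combine the given distributions, namely the priority (``class first, item second'') composition $\Lambda$, and in verifying the two facts that make it work --- that $\Lambda$ actually lands in the set of linear orders (the transitivity check) and that its fibres over the events $\mathcal{L}_{(a,A)}$ split as products, which is precisely what turns independence of the two draws into the multiplicative identity $p(a,A)=\omega(i,\pi(A))\,\sigma_i(a,A\cap X_i)$. The remaining steps, in both directions, are routine bookkeeping with pushforwards.
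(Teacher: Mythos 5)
Your proof is correct. The sufficiency direction (from \rum{} components to \rum{} $p$) is essentially the paper's own construction: your pushforward of $\mathbb{P}_\omega\otimes\bigotimes_i\mathbb{P}_i$ under the lexicographic map $\Lambda$ is exactly the paper's distribution $q$, which puts mass $v(\pi(L_X))\prod_i s_i(L_{X\upharpoonright X_i})$ on orders ranking classes as blocks and zero elsewhere; your verification via the product structure of the fibres $\Lambda^{-1}(\mathcal{L}_{(a,A)})$ is a cleaner way of doing the paper's factor-out-and-sum-to-one bookkeeping, and it gives normalization of the measure for free. The necessity direction, however, is genuinely different. The paper first proves a structural fact: using the \scc{} invariance $\omega(i,\pi(A))=\omega(i,\pi(B))$ on three-element menus $\{a,b,c\}$ with $a,b\in X_i$, $c\in X_j$, it shows that any rationalizing distribution for $p$ assigns zero probability to orders that interleave classes, and then reads off a distribution on $\mathcal{L}_I$ from the surviving block orders. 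You instead bypass this lemma entirely by restricting the rationalizing distribution to a transversal $\{x_j\}_{j\in I}$ of representatives and exploiting that $p(x_i,A_J)=\omega(i,J)\,\sigma_i(x_i,\{x_i\})=\omega(i,J)$ on menus of representatives; together with the restriction-to-$X_i$ argument for $\sigma_i$ (which coincides with the paper's), this is shorter, more elementary, and produces a fully explicit distribution on $\mathcal{L}_I$ without any appeal to the support structure of the original measure. What the paper's longer route buys is the additional structural information that every \rum{} rationalization of a non-degenerate \scc{} must be supported on class-block orders, a fact of independent interest that your argument neither needs nor delivers.
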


In other words, rationalizability by \rum{} is inherited from $p$ by its components $\omega, \{\sigma_i\}_{i \in I}$; vice versa, it is passed on from $\omega$ and $\{\sigma_i\}_{i \in I}$ to their composition, $p$. 
In general, $\omega$ and $\{\sigma_i\}_{i \in I}$ may admit multiple \rum{} representations, hence the same holds for $p$. 
The proof (Appendix~\ref{PROOF:categorizationRUM}) shows that, for such choices, the support of any $\mathbb{P}$ rationalizing $p$ only contains linear orders which unequivocally rank classes, that is, which sequentially arrange items in the same class.

\subsection{On related stochastic choice models}

Stochastic choices with categorization also relate to two prominent random utility models explicitly accounting for similarity between items.
In \textit{elimination by aspects} (\eba) by \cite{Tversky1972}, options are perceived as bundles of aspects. 
The probability item $x$ is chosen from $A$ depends on how valuable its distinctive aspects are, and how likely $x$ stands out among alternatives sharing aspects of interest. 
%Thus, the numerator can be thought as the \textquotedblleft average distinctive value\textquotedblright\ of $x$, while the 
According to the \emph{attribute rule} in \cite{GulNatenzonPesendorfer2014}, the probability an item $x$ is chosen depends on which attributes are exhibited in $x$, and, specifically, the intensity of their presence.
Attribute rules and \eba{} are not \scc{}, as they are in principle procedures for one-step choices.  
The \scc{} model shares, however, the interest in the implications of similarities among items for the choice process. 
We argue that defining similarity rigorously ex ante is difficult. 
Its perception may depend on shared properties and on the intensity of these features, as it is the case for attribute rules and \eba; alternatively, it may be goal-derived, or based on a pre-established classification structure. 
As the \scc{} model is agnostic on the motives for item classification, it could be applied to empirically uncover similarity criteria.
\scc{} also differs from \textit{sequential Luce model} \citep{Zhang2016}, where the agent is allowed a default option and \IIA{} must hold within a class.
Under the additional assumption that items within a class have equal probability of choice, stochastic choices by categorization reduce to \textit{Luce with replicas}, introduced by \cite{Faro2023}.
\textit{Luce with replicas} is also a particular case of \nsc, so that it lies in the intersection of \nsc{} and non-degenerate \scc.\\

The Associationistic Luce Model by \cite{LiTang2016} accounts for the phenomenon where an item's salience extends to its containing class.
First a class is chosen according to a Luce-type formula with associationistic salience as weights, then an item within it is selected according to Luce's rule.
The model is a special case of \scwc{}, with the restriction that choices $\{\sigma_i\}_{i \in I}$ are Luce model. 
Stochastic choices with weak categorization also have affinity to Perception adjusted Luce model (\textsf{PALM}) by \cite{EcheniqueSaitoTserenjigmid2018}.
Their model describes choices of an agent that considers items in a priority order. 
As a result, items are chosen according to \IIA{} only if evaluated simultaneously. 
Under the assumption that \IIA{} holds within each class, and that alternatives in the same class are noticed at the same time, by association, \scwc{} come conceptually close to \textsf{PALM}. 
However, technical differences (e.g., \textsf{PALM} contemplate a default option) prevent from drawing a formal comparison of the models.
%Notice, nonetheless, that no link can be suggested to \scc, as \cneu{} is violated by any potential category.
Finally, we recognise similarities to the approach in General Luce model, introduced in \cite{Echeniqueetal2019}. 
The model entails a two-step choice. 
First, a total order is applied, to isolate maximal elements in the menu. 
Then a Luce's type formula is applied to break ties. 
In fact, the total order naturally induces a partition of items in equivalence classes. 
While these could be interpreted as categories, \scwc{} only refers to positive stochastic choices, so that the two models of choice do not intersect.

%%%%%%%%%%%%%%%%%%%%%%%%%%%%%%%%%%%%%%%%%%%%
%%%%%%%%%%%%%%%%%%%%%%%%%%%%%%%%%%%%%%%%%%%%

\begin{figure}[htbp]
	\centering
	\begin{tikzpicture}[
		% Define custom styles
		outer/.style={draw=blue!20!black , thick, fill=blue!3},
		middle/.style={draw=green!60, thick, fill=green!3},
		inner/.style={draw=orange!70, thick, fill=orange!8},
		nested/.style={draw=red!60!black, thick, fill=red!2},
		core/.style={draw=purple!70!black, thick, fill=purple!5, minimum size=1.4cm},
		label/.style={font=\small\bfseries, text=black!80}
		]
		
		% First draw all fills
		\fill[blue!1] (0,0) ellipse (5.8cm and 3.5cm);
		\fill[green!2] (-2.2,0) ellipse (2.1cm and 2.4cm);
		\fill[green!2] (2,0) ellipse (3.2cm and 2.4cm);
		
		% Then draw all the outlines to ensure they're visible
		\draw[blue!50!black, thick] (0,0) ellipse (5.8cm and 3.5cm);
		\draw[gray!60!black, thick] (-2.2,0) ellipse (2.1cm and 2.4cm);
		\draw[green!50!black!, thick] (2,0) ellipse (3.2cm and 2.4cm);
		
		% Add labels
		\node[label, rounded corners=2pt, inner sep=3pt] 
		at (0,3.8) {n-deg \scwc};
		\node[label, rounded corners=2pt, inner sep=3pt] 
		at (-2.2,2.7) {n-deg \scc};
		\node[label, rounded corners=2pt, inner sep=3pt] 
		at (2,2.7) {\nsc};
		
		% Nested Logit ellipse
		\draw[nested] (2.2,0) ellipse (2cm and 1.5cm);
		\node[label, rounded corners=2pt, inner sep=3pt] 
		at (2.2,1.8) {Nested Logit};
		
		% Central circle - Luce model
		\node[draw, circle, core] at (2.2,0) {};
		\node[label, rounded corners=2pt, inner sep=3pt] 
		at (2.2,-0.9) {Luce Model};
		
	\end{tikzpicture}
	\caption{Stochastic choices by (weak) categorization and related choice models. 
		Since degenerate cases of \scc{} and \scwc{} do not restrict behavior, we neglect them in this diagram.}
	\label{FIG1:insiemini}
\end{figure}
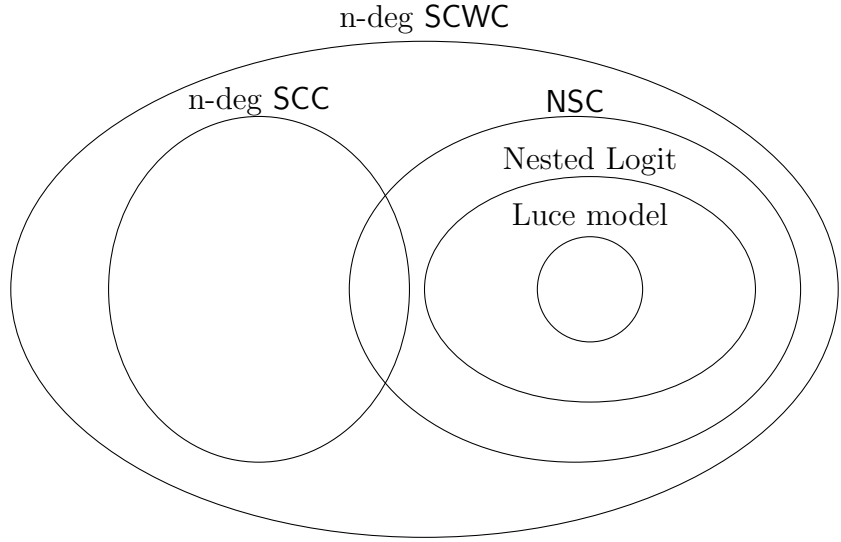

%%%%%%%%%%%%%%%%%%%%%%%%%%%%%%%%%%%%%%%%%%%%
%%%%%%%%%%%%%%%%%%%%%%%%%%%%%%%%%%%%%%%%%%%%

\section*{Conclusions} \label{conclusion}

This work defines and characterizes stochastic choices with categorization. 
The model is permissive, as no particular form of rational nor behavioral decision-making is assumed.
Yet, the definition is tight enough to discriminate choices that can be decomposed into independent sub-choices. 
%Each sub-choice can then be studied separately, to recover local rationality traits or simply to adopt the most fitting model for analyzing choice behavior. 
We show that any \scc{} can be represented by a probability distribution on a set of deterministic choices which share a two-step structure with respect to the same partition. 
More precisely, the two-step procedure in deterministic choice behavior emerges in \cneu{} for the induced stochastic choice. 
We also introduce and characterize a generalization of the model. 
Our definition of a (weak) category sheds light on the implications of Luce's \IIA{} and its variations for choice behavior, and further suggests how modelling the first step of choice can prove useful to define relations of substitutability among items. 
Designing a statistical test to uncover categories from choice data falls outside the scope of this work, and remains a natural direction for future research.

%%%%%%%%%%%%%%%%%%%%%%%%%%%%%%%%%%%%%%%%%%%%
%%%%%%%%%%%%%%%%%%%%%%%%%%%%%%%%%%%%%%%%%%%%

%\clearpage

\section*{APPENDIX} \label{APP}

\begin{appendices}
	
	\section{Counterexamples} 
	
	\subsection{Falsifiability}\label{SEC:a nonSSC}
	
	The next counterexample shows that even for a small set $X$, stochastic choices by categorization constitute a proper subset of all possible stochastic choices. 
	Hence, Definitions~\ref{DEF: resolvable functions} and \ref{DEF:weakcategorization} actually provide a discrimination criterion for choices.
	
	\begin{counterexample} \label{EX:nonresolvable}
		Let $X = \{a, b, c\}$.
		\begin{itemize}
			\item $p(a, ab) = \frac{1}{2}$; $\quad p(a, abc) = \frac{1}{8};$
			\item $p(b, bc) = \frac{1}{3}$; $\quad p(b, abc) = \frac{1}{6};$
			\item $p(c, ac) = \frac{1}{4}$; $\quad p(c, abc) = \frac{17}{24}.$
		\end{itemize}
		For $p$ to be \scc, at least one binary set has to be a \textit{category}. 
		Below we check whether $p$ satisfies the definition.
		\begin{itemize}
			\item $\{a, b\}$ : $p(ab, abc) = \frac{7}{24}; \ \ p(a, ab) \ p(ab, abc) = \frac{1}{2} \frac{7}{24} = \frac{7}{48} \neq p(a, abc)$
			\item $\{b, c\}$ : $p(bc, abc) = \frac{7}{8}; \ \ p(b, bc) \ p(bc, abc) = \frac{1}{3} \frac{7}{8} = \frac{7}{24} \neq p(b, abc)$
			\item $\{a, c\}$ : $p(ac, abc) = \frac{5}{6}; \ \ p(c, ac) \ p(ac, abc) = \frac{1}{4} \frac{5}{6} = \frac{5}{24} \neq p(c, abc)$
		\end{itemize}
	\end{counterexample}
	
	The inequalities show none of these sets qualifies as category. 
	%The crux of such cases is the violation of a principle implicit to both definitions: the presence of external items shall not alter probability ratios between items in the same class (\cind).
	
	\subsection{Independence of the axioms} \label{SEC:independence of the axioms}
	
	The tables below presents choice data for $X =\{a, b, c, x\}$. 
	
	\begin{center}
		\begin{tabular}{c | c  c}
			&	$a$ 	&	 $b$	\\
			\hline
			$p$		& 	$\frac{2}{5}$	&	$\frac{3}{5}$
		\end{tabular}
		\quad
		\begin{tabular}{c | c  c}
			&	$a$ 	&	 $c$	\\
			\hline
			$p$		& 	$\frac{3}{4}$	&	$\frac{1}{4}$
		\end{tabular}
		\quad
		\begin{tabular}{c | c  c}
			&	$b$ 	&	 $c$	\\
			\hline
			$p$		& 	$\frac{1}{3}$	&	$\frac{2}{3}$
		\end{tabular}
		\quad
		\begin{tabular}{c | c c c}
			&	$a$ 	&	 $b$	&	$c$	\\
			\hline
			$p$		& 	$\frac{1}{5}$	&	$\frac{3}{5}$ & $\frac{1}{5}$
		\end{tabular}
	\end{center}
	\begin{center}
		\begin{tabular}{c | c c }
			&	$a$  &	$x$  \\
			\hline
			$p$		& 	$\frac{1}{3}$	&	$\frac{2}{3}$ 
		\end{tabular}
		\quad
		\begin{tabular}{c | c c }
			&	$b$  &	$x$  \\
			\hline
			$p$		& 	$\frac{7}{8}$	&	$\frac{3}{8}$ 
		\end{tabular}
		\quad 
		\begin{tabular}{c | c c }
			&	$c$  &	$x$  \\
			\hline
			$p$		& 	$\frac{7}{8}$	&	$\frac{3}{8}$ 
		\end{tabular}
		\quad
		\begin{tabular}{c | c  c	c}
			&	$a$ 	&	 $b$  &  $x$\\
			\hline
			$p$		& 	$\frac{1}{4}$	&	$\frac{3}{8}$ & $\frac{3}{8}$
		\end{tabular}
	\end{center}
	\begin{center}
		\begin{tabular}{c | c  c	c}
			&	$a$ 	&	 $c$	& $x$	\\
			\hline
			$p$		& 	$\frac{15}{32}$	&	$\frac{5}{32}$ & $\frac{3}{8}$
		\end{tabular}
		\quad
		\begin{tabular}{c | c	c	c}
			&	$b$ 	&	 $c$ & $x$	\\
			\hline
			$p$		& 	$\frac{5}{24}$	&	$\frac{10}{24}$ & $\frac{3}{8}$
		\end{tabular}
		\quad
		\begin{tabular}{c | c c c c}
			&	$a$ 	&	 $b$	&	$c$		& $x$	\\
			\hline
			$p$		& 	$\frac{1}{8}$	&	$\frac{3}{8}$ & $\frac{1}{8}$ & $\frac{3}{8}$
		\end{tabular}
	\end{center}
	
	According to this choice data, \cind{} holds for $G = \{a, b, c\}$.
	However, observe that $p(x, ax) = \frac{2}{3} \ \neq \ \frac{3}{8} = p(x, acx)$.
	This is sufficient to determine failure of \cneu.
	It follows that \cind{} $ \ \nRightarrow \ $ \cneu{}.\\
	
	Consider now the following choice data for $X = \{a, b, x\}$.
	
	\begin{center}
		\begin{tabular}{c | c  c}
			&	$a$ 	&	 $b$	\\
			\hline
			$p$		& 	$\frac{2}{5}$	&	$\frac{3}{5}$
		\end{tabular}
		\quad
		\begin{tabular}{c | c  c}
			&	$b$ 	&	 $x$	\\
			\hline
			$p$		& 	$\frac{1}{3}$	&	$\frac{2}{3}$
		\end{tabular}
		\quad
		\begin{tabular}{c | c c }
			&	$a$  &	$x$  \\
			\hline
			$p$		& 	$\frac{1}{3}$	&	$\frac{2}{3}$ 
		\end{tabular}
		\quad
		\begin{tabular}{c | c c c}
			&	$a$ 	&	 $b$	&	$x$	\\
			\hline
			$p$		& 	$\frac{1}{6}$	&	$\frac{1}{6}$ & $\frac{2}{3}$
		\end{tabular}
	\end{center}

	Notice that \cneu{} holds for set $Z = \{a, b\}$. 
	Nonetheless, $\frac{p(a, ab)}{p(b, ab)} \ \neq \ \frac{p(a, abx)}{p(b, abx)}$, so that \cind{} is violated. 
	This shows that \cneu{} $\ \nRightarrow \ $ \cind.
	
	It follows that the two properties are independent.
	
	\section{Proofs}
	
	\subsection{Uniqueness of the coarsest partition} 
	
	Denote $\mathcal{P}$ the set of all partitions for which $p$ on $X$ is non-degenerate \scc.
	
	\begin{lemma} \label{Lemma: no intersection}
		For any two distinct partitions $\{X_i\}_{i \in I}, \ \{X_j\}_{j \in J} \in \mathcal{P}$, $X_i \cap X_j \neq \varnothing$ implies $X_i \subseteq X_j$ or $X_j \subseteq X_i$.
	\end{lemma}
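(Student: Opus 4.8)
The plan is to reduce the statement to the earlier Proposition that any two distinct non-trivial categories are either disjoint or nested, after dispatching the degenerate cardinalities by hand. First I would fix distinct partitions $\{X_i\}_{i \in I}, \{X_j\}_{j \in J} \in \mathcal{P}$ together with classes $X_i$ and $X_j$ from them satisfying $X_i \cap X_j \neq \varnothing$. If $X_i = X_j$ as sets the conclusion $X_i \subseteq X_j$ is immediate, so from now on I may assume $X_i \neq X_j$.

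Next I would dispose of the singletons. If $|X_i| = 1$, write $X_i = \{a\}$; then $a \in X_i \cap X_j$ forces $X_i \subseteq X_j$, and symmetrically if $|X_j| = 1$ then $X_j \subseteq X_i$. Moreover, since both partitions are non-degenerate we have $|I|, |J| > 1$, hence $X_i \neq X$ and $X_j \neq X$, so $|X_i|, |X_j| < n$. The only remaining case is therefore $2 \leq |X_i|, |X_j| < n$.

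The key step is then to observe that in this case $X_i$ and $X_j$ are \emph{non-trivial categories} in $X$. I would verify this directly from the defining decomposition $p(a, A) = \omega(\pi(a), \pi(A))\,\sigma_{\pi(a)}(a, A \cap X_{\pi(a)})$ of a non-degenerate \scc. For \cind{} with $C = X_i$: whenever $S \subseteq X_i$ we have $\pi(S) = \{i\}$ and $\omega(i, \{i\}) = 1$, and for $E \subseteq X \setminus X_i$ we have $(S \cup E) \cap X_i = S$ while $\pi(S \cup E) = \{i\} \cup \pi(E)$ does not depend on $S$; hence the $\omega$-factors cancel in $p(a, S \cup E)/p(b, S \cup E)$, leaving $\sigma_i(a, S)/\sigma_i(b, S) = p(a,S)/p(b,S)$. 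For \cneu{} with $C = X_i$: for $x \in E \subseteq X \setminus X_i$ and $\varnothing \neq S \subseteq X_i$, both $\pi(S \cup E) = \pi(X_i \cup E) = \{i\} \cup \pi(E)$ and the relevant intersection $(S \cup E) \cap X_{\pi(x)} = E \cap X_{\pi(x)} = (X_i \cup E) \cap X_{\pi(x)}$ are unchanged when $S$ is replaced by $X_i$, so $p(x, S \cup E) = p(x, X_i \cup E)$. (Equivalently, this is precisely the content of the ``only if'' direction of Theorem~\ref{THEO: characterization resolvability}.) Combined with $1 < |X_i| < |X|$, this makes $X_i$ a non-trivial category, and likewise $X_j$.

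Finally I would invoke the Proposition on non-trivial categories: $X_i$ and $X_j$ are distinct non-trivial categories with $X_i \cap X_j \neq \varnothing$, so $X_i \subseteq X_j$ or $X_j \subseteq X_i$, which is the claim. The only mildly delicate point is the membership bookkeeping for $\pi$ and for the sets $A \cap X_k$ with $k \neq i$ in the verification that classes are categories; the rest is a routine case split on $|X_i|$ and $|X_j|$.
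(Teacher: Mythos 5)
Your argument is correct, and it takes a genuinely different (though ultimately equivalent in substance) route from the paper's. The paper proves the lemma directly from the \scc{} decomposition, with no appeal to categories: assuming $X_i \setminus X_j \neq \varnothing$ and $X_j \setminus X_i \neq \varnothing$, it picks $a \in X_i \setminus X_j$, $b \in X_i \cap X_j$, $c \in X_j \setminus X_i$, and computes $p(a, abc) = p(a, ab)\, p(ab, abc)$ using the partition $\{X_i\}_{i \in I}$, while $p(a, abc) = p(a, ab)$ using $\{X_j\}_{j \in J}$; this forces $p(ab, abc) = 1$, contradicting positivity. You instead reduce the lemma to the Proposition that two distinct non-trivial categories are disjoint or nested, after verifying (essentially reproving the relevant direction of Theorem~\ref{THEO: characterization resolvability}) that every class of size at least two in a partition witnessing non-degenerate \scc{} satisfies \cind{} and \cneu, and after dispatching by hand the cases of equal classes, singleton classes, and the bound $|X_i|, |X_j| < n$ coming from non-degeneracy. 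Your reduction is non-circular, since neither the Proposition nor Theorem~\ref{THEO: characterization resolvability} relies on this lemma, and your verification that classes are categories is accurate. The trade-off: the paper's proof is shorter and self-contained, needing no case split and no auxiliary results, whereas yours is more modular and makes transparent that the lemma is just the categories-are-disjoint-or-nested fact transported along the observation that classes are categories; note, though, that the same three-element-menu contradiction resurfaces inside the Proposition's proof, so the underlying mechanism is shared.
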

	
	\begin{proof}
		We proceed by contradiction.
		Without loss of generality, fix one $j$ such that $X_i \cap X_j \neq \varnothing$, and assume $X_i \setminus X_j \neq \varnothing$ and $X_j \setminus X_i \neq \varnothing$. 
		Then there is a menu $A = \{a, b, c\}$ where $a, b \in X_i$ and $b, c \in X_j$.
		Since $p$ is \scc{} with respect to $\{X_i\}_{i \in I}$, it must be $p(a, abc) = p(a, ab) p(ab, abc)$.
		Similarly, $p$ is \scc{} with respect to $\{X_j\}_{j \in J}$, $p(a, abc) = p(a, ab)$. 
		Thus, $p(a, ab) p(ab, abc) = p(a, ab)$ and $p(ab, abc) = 1$, which violates positivity of $p$ and yields a contradiction.
		This concludes the proof.
	\end{proof}
	
	We write $\{X_i\}_{i \in I} \succsim \{X_j\}_{j \in J}$ when $\{X_i\}_{i \in I}$ is coarser than $\{X_j\}_{j \in J}$, namely, if for all $j \in J$ there exists $i \in I$ such that $X_j \subseteq X_i$. 
	
	\begin{lemma} \label{Lemma: coarser}
		For any two $\{X_i\}_{i \in I}, \{X_j\}_{j \in J} \in \mathcal{P}$ there exist $\{X_k\}_{k \in K} \in \mathcal{P}$ such that $\{X_k\}_{k \in K} \succsim \{X_i\}_{i \in I}$ and $\{X_k\}_{k \in K} \succsim \{X_j\}_{j \in J}$.
	\end{lemma}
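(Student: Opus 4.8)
The plan is to take $\{X_k\}_{k\in K}$ to be the family of $\subseteq$-maximal elements of $P\cup Q$, where $P=\{X_i\}_{i\in I}$ and $Q=\{X_j\}_{j\in J}$, and to show that (a) this is a partition of $X$, coarser than both $P$ and $Q$ and non-degenerate, and (b) $p$ is \scc{} with respect to it; together these give $\{X_k\}_{k\in K}\in\mathcal{P}$ and $\{X_k\}_{k\in K}\succsim\{X_i\}_{i\in I}$, $\{X_k\}_{k\in K}\succsim\{X_j\}_{j\in J}$, which is the claim. For (a): by Lemma~\ref{Lemma: no intersection} any block of $P$ and any block of $Q$ are disjoint or nested, and since blocks within $P$ (resp.\ within $Q$) are pairwise disjoint, any two members of $P\cup Q$ are disjoint or nested; hence two distinct maximal members cannot meet, and since every point of $X$ lies in some $P$-block, which sits inside a maximal member, the maximal members partition $X$. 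Each $P$-block and each $Q$-block lies inside a unique $X_k$, so $\{X_k\}_{k\in K}$ is coarser than both, whence $|K|\leqslant|I|<n$; and no $X_k$ can equal $X$ (that would force $P$ or $Q$ to be the trivial one-block partition, contradicting $1<|I|,|J|$), so the $X_k$ are proper nonempty disjoint sets covering $X$, giving $|K|\geqslant2$. Thus $1<|K|<n$.

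For (b) the first observation is that \emph{every block of a non-degenerate \scc{} is a category}: for a block $X_i$, both \cind{} and \cneu{} follow directly from the factorization $p(a,A)=\omega(i,\pi(A))\,\sigma_i(a,A\cap X_i)$ together with the two displayed identities $p(a,B)=\sigma_i(a,B)$ for $B\subseteq X_i$ and $p(A\cap X_i,A)=\omega(i,\pi(A))$. In particular each $X_k$, being a block of $P$ or of $Q$, is a category. I would then set $\sigma_k:=p\!\upharpoonright_{X_k}$ (a positive stochastic choice on $X_k$) and, for a nonempty $L\subseteq K$ with $k\in L$, define $\omega(k,L):=p(A\cap X_k,A)$ for a menu $A$ with $\{k'\in K:A\cap X_{k'}\neq\varnothing\}=L$. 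The reconstruction of $p$ is then immediate: applying the \scc{}-factorization of $p$ with respect to whichever of $P,Q$ has $X_k$ as a block gives $p(a,A)=p(A\cap X_k,A)\,p(a,A\cap X_k)=\omega(k,L)\,\sigma_k(a,A\cap X_k)$ for $a\in A\cap X_k$; positivity of $\omega$ is clear, and $\sum_{k\in L}\omega(k,L)=p(A,A)=1$.

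The main obstacle is the well-definedness of $\omega$, i.e.\ that $p(A\cap X_k,A)$ depends on $A$ only through the set of $K$-blocks it meets. I would prove this by transforming $A$ into $B$ (with the same set of $K$-blocks met) one $K$-block at a time. Replacing the nonempty set $A\cap X_{k'}$, for any $k'\neq k$, by another nonempty subset of $X_{k'}$ leaves $p(y,\cdot)$ unchanged for every $y\in A\cap X_k$, by \cneu{} applied to the category $X_{k'}$ with external set $E'=A\setminus X_{k'}\supseteq A\cap X_k$; summing over $y\in A\cap X_k$ preserves $p(A\cap X_k,\cdot)$. Iterating over all such $k'$ turns $A$ into a menu $A'$ with $A'\cap X_{k'}=B\cap X_{k'}$ for every $k'\neq k$ and $A'\cap X_k=A\cap X_k$, without changing $p(A\cap X_k,\cdot)$. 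Finally $A'$ and $B$ differ only inside $X_k$, and \cneu{} for the category $X_k$ — in its summed-over-external-items form, namely that $p(S,S\cup E)$ is the same for all nonempty $S\subseteq X_k$ — yields $p(A\cap X_k,A')=p(B\cap X_k,B)$. Chaining the equalities gives $p(A\cap X_k,A)=p(B\cap X_k,B)$, so $\omega$ is well defined, completing the proof.
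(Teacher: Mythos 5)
Your proposal is correct and takes essentially the same route as the paper: the coarser partition is built from the $\subseteq$-maximal blocks of the two partitions, its partition property and non-degeneracy follow from Lemma~\ref{Lemma: no intersection}, and \scc{} is verified by restricting $p$ to each block and defining $\omega(k,L)=p(A\cap X_k,A)$. The only difference is bookkeeping: the paper inherits $\omega_K$ and $\sigma_k$ directly from the $\{X_i\}_{i\in I}$- and $\{X_j\}_{j\in J}$-representations via index maps, whereas you re-derive the well-definedness of $\omega$ from \cneu{} block by block (as in the proof of Theorem~\ref{THEO: characterization resolvability}), which spells out a step the paper leaves terse.
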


	\begin{proof}
		We exhibit such partition and show it belongs to $\mathcal{P}$ by showing that $p$ is non-degenerate \scc. 
		
		Let $\{X_k\}_{k \in K} =  \{X_i \in \{X_i\}_{i \in I} \colon \nexists X_j \in  \{X_j\}_{j \in J} \text{ such that \ } X_i \subsetneq X_j\} \cup \{X_j \in \{X_j\}_{j \in J} \colon \nexists X_i \in  \{X_i\}_{i \in I} \text{ s.t.\ } X_j \subsetneq X_i\}$. 
		Thus $\{X_k\}_{k \in K}$ collects the largest sets of each partition. 
		Notice that if $\{X_i\}_{i \in I} \succsim \{X_j\}_{j \in J}$, then $\{X_k\}_{k \in K} = \{X_i\}_{i \in I}$, and the lemma follows trivially.
		Also, $\{X_k\}_{k \in K} \neq \{X\}$, since $\{X\}$ does not belong to $\{X_i\}_{i \in I}$ nor $\{X_j\}_{j \in J}$.
		Similarly, $\{X_k\}_{k \in K}$ cannot only contain singletons.
		We first show that $\{X_k\}_{k \in K}$ is a partition. 
		To see that $X_k \cap X_{k^\prime} = \varnothing$ for any $k, k^\prime \in K$, recall that by construction either $X_k, X_{k^\prime} \in \{X_i\}_{i \in I}$, and $X_k \cap X_{k^\prime} = \varnothing$; or $X_k\in \{X_i\}_{i \in I}$ and $X_{k^\prime} \in \{X_j\}_{j \in J}$, and $X_k \cap X_{k^\prime} = \varnothing$ by Lemma \ref{Lemma: no intersection}. 
		By construction, $\bigcup_{k \in K} X_k \subseteq X$. 
		Also, each $x \in X$ belongs to some $X_i, i \in I$ and to some $X_j, j \in J$. 
		Thus $X_i \cap X_j \neq \varnothing$ and it must be $X_i \subseteq X_j$ (or $X_j \subseteq X_i$) by Lemma \ref{Lemma: no intersection}.
		By construction, $X_k = X_j$ (or $X_k = X_i$) for some $k$, and $x \in \bigcup_{k \in K} X_k$. 
		It follows that $\cup_{k \in K} X_k = X$.

		We now show that $p$ is non-degenerate \scc{} w.r.t.\ $\{X_k\}_{k \in K}$. 
		Note that for any $X_i \subseteq X_j$, it must be $\sigma_j(a, A_j) = \omega_I(i, \pi_I(A_j)) \sigma_i(a, A_J \cap X_i)$, and $\omega_J(j, \pi_J(A)) = \displaystyle \sum_{i \text{ s.t. } X_i \subseteq X_j} \omega(i, \pi(A))$. 
		Let $\iota \colon I \to K$	and $\gamma \colon J \to K$ be maps that associate each set $X_i, X_j$ to the set $X_k$ that includes each. 
		For each $k \in K$, define $\sigma_k = \sigma_i$, if $X_k = X_i$ for some $i \in I$; or $\sigma_k = \sigma_j$ if $X_k = X_j$ for some $j$. 
		Similarly, for any $K^\prime \subseteq K$, let $\omega_K(k, K^\prime) = \omega_I(\iota^{-1}(k), \iota^{-1}(K^\prime))$	if $X_k = X_i$ for some $i \in I$, or $\omega_K(k, K^\prime) = \omega_J(\gamma^{-1}(k), \gamma^{-1}(K^\prime))$ if $X_k = X_j$ for some $j$. 
		As a result, for all $A \in \X$, for all $a \in A \in X_k$, $p(a, A) = \omega_K(k, \pi_K(A)) \ \sigma_K(a, A \cap X_k)$. 
	\end{proof}

	\uniquenessofcoarsestpartition*
	
	\begin{proof}	\label{PROOF:uniquenessofcoarsestpartition}
		It follows easily from Lemma \ref{Lemma: no intersection} and Lemma \ref{Lemma: coarser}.
	\end{proof}

		\subsection{Classification structures} \label{SEC: categorical structures}

	\categoriesintersection*

	\begin{proof}
		Towards a contradiction, assume $C \cap D \neq \varnothing$ and there are items $a, b, c \in X$ such that $a \in C \setminus D$, $b \in C \cap D$, $c \in D \setminus C$. 
		Since $D$ is a category, by \cneu\ it must be $p(a, ab) = p(a, abc)$, and equivalently $p(b, ab) = p(b, abc) + p(c, abc)$. 
		This entails that $p(b, ab) > p(b, abc)$, by positivity of $p$. 
		In turn, this implies that 
		$
		 \frac{p(a, ab)}{p(b, ab)} < \frac{p(a, abc)}{p(b, abc)},
		$
		which is a violation of \cind{} for $C$ and contradicts the hypothesis $C$ is a category. 
	\end{proof}
	
	\weakcategoriesintersection*
	
	\begin{proof}
		If $|W \cap T| = 1$, then $C \cap D$ trivially is a weak category. 
		If $|W \cap T| > 1$, consider any $S \subseteq W \cap T$ and $a, b \in S$. 
		Notice that, for any $E \subseteq (X \setminus W \cap T)$, $E = (E \cap T) \cup (E \cap (X \setminus T))$. 
		Since $a, b \in W$, \cind{} is satisfied and $\frac{p(a, S)}{p(b, S)} = \frac{p(a, S \cup (E \cap T))}{p(b, S \cup (E \cap T))}$. 
		Similarly, since $a, b \in T$, \cind{} is satisfied and $\frac{p(a, S \cup (E \cap T))}{p(b, S \cup (E \cap T))} = \frac{p(a, S \cup (E \cap T) \cup (E \cap (X \setminus T))}{p(b, S \cup (E \cap T) \cup (E \cap (X \setminus T)))}$. 
	\end{proof}

	%%%%%%%%%%%%%%%%%%%%%%%%%%%%%%%%%%%%%%%%%%%%
	%%%%%%%%%%%%%%%%%%%%%%%%%%%%%%%%%%%%%%%%%%%%

	\subsection{A characterization of \scc}

	\resolvability*
	
	\begin{proof} \label{PROOF:categorization}
		$ $ \newline
		[$\Leftarrow$].
		Let $p \colon X \times \X \to [0, 1]$ be a stochastic choice with categorization. 
		We point out that any $X_i$ s.t.\ $|X_i| \geqslant 2$ is a non-trivial category. 
		
		In fact, by Definition~\ref{DEF: resolvable functions}, $\forall A \in \X$, $\forall a, b \in A \cap X_i$, 
		$$
		\displaystyle \frac{p(a, A)}{p(b, A)} = \frac{\sigma_i(a, A \cap X_i) \ \omega(i, \pi(A))}{\sigma_i(b, A \cap X_i) \ \omega(i, \pi(A))} = \frac{\sigma_i(a, A \cap X_i)}{\sigma_i(b, A \cap X_i)} = \frac{p(a, A \cap X_i)}{p(b, A \cap X_i)}
		$$ 
		and \cind{} is satisfied. 
		
		Also, notice that for any $E \subseteq X \setminus X_i$, any $S \subseteq X_i$, $x \in E \cup X_j$ for some $j \neq i$,
		$$
		p(x, S \cup E) \ = \ \omega(j, \pi(S \cup E)) \ \sigma_j(x, E \cup X_j) = p(x, X_i \cup E),
		$$
		so that \cneu{} holds, as well.

		\noindent [$\Rightarrow$].	
		We show that the existence of a non-trivial category $C \subseteq X$ for a positive stochastic choice function $p \colon X \times \X \to [0,1]$ implies that $p$ is \scc.

		First, we build a partition of $X$. 
		By hypothesis, there exists a non-trivial category $C$ in $X$. 
		Let $C$ be the first subset $X_1$ in the partition. 
		Now look for a non-trivial category in $X \setminus X_1$. 
		If any, pick one and let it be the second subset $X_2$ in the partition, and proceed in the same way until no more non-trivial categories are found. 
		Assign remaining items to singletons, and collect indices in $I = \{1, \dots, k\}$.

		We now show that $p$ is \scc{} with respect to partition $\{X_i\}_{i \in I}$ built as above.
		For each $i \in I$, define $\sigma_i$ as the restriction of $p$ to $X_i$.
		That is, for all $B \subseteq X_i$, $a \in B$, $\sigma_i(a, B) = p(a, B)$.
		%Also, define $\omega(i, \pi(A)) = \sum_{x \in X_i} p(x, A)$. 
		
		Consider any menu $A \in \X$, and denote $S = A \cap X_i$, $E = A \setminus X_i$ the set of items in $A$ within and outside category $X_i$, respectively. 	
		By Definition~\ref{DEF:general category}, \cind{} holds for $X_i$, so that the ratio $\frac{p(a, S \cup E)}{p(a, S)} = \Delta$ is fixed for all $a \in S$ to some $\Delta \in \mathbb{R}$. 
		It follows that
		$$
		\forall a \in S, \ \ p(a, S \cup E) = \sigma_i(a, S) \, \cdot \, \Delta.
		$$
		We define $\omega(i, \pi(S \cup E))  =  \Delta  =  p(S, S \cup E) \ = \ \displaystyle \sum_{a \in S} p(a, A)$.
		We show that $\omega$ is well-defined.
		In particular, we need to check that $\omega(i, \pi(A)) = \omega(i, \pi(B))$ for any $B \in \X$ s.t.\ $\pi(A) = \pi(B)$.
		Without loss of generality, fix $B$ and build $A^\prime = E \cup (B \cap X_i)$. 
		By \cneu, $p(x, A) = p(x, (A \cap X_i) \cup E) = p(x, (B \cap X_i) \cup E) = p(x, A^\prime)$ for all $x \in E$
		Thus, $p(A \cap X_i, A) = 1 - \sum_{x \in E} p(x, A) =  1 - \sum_{x \in E} p(x, A^\prime) = p(B \cap X_i, A^\prime)$. 
		Substituting $B \cap X_j$ to $A \cap X_j$ for each $j \in \pi(A)$ to obtain set $B$, and applying the same reasoning, we finally have $p(A \cap X_i, A) = p(B \cap X_i, B)$ for all $i \in \pi(A)$, and function $\omega$ is well-defined.	
	\end{proof}

	\subsection{A characterization of \scwc}
	
	\generalresolvability*
	
	\begin{proof} \label{PROOF: weak categorization}
		$ $ \newline
		[$\Leftarrow$].
		Let $p \colon X \times \X \to [0, 1]$ be a non-degenerate stochastic choice by weak categorization. 
		By Definition~\ref{DEF:weakcategorization}, $\forall A \in \X$, $\forall a, b \in A \cap X_i$, 
		$$
		\displaystyle \frac{p(a, A)}{p(b, A)} = \frac{\sigma_i(a, A \cap X_i) \ \omega_A(i, \pi(A))}{\sigma_i(b, A \cap X_i) \ \omega_A(i, \pi(A))} = \frac{\sigma_i(a, A \cap X_i)}{\sigma_i(b, A \cap X_i)} = \frac{p(a, A \cap X_i)}{p(b, A \cap X_i)},
		$$ 
		\cind{} is satisfied and $X_i$ is a non-trivial category.
		
		\noindent [$\Rightarrow$].	
		We show that the existence of a weak category $G \subseteq X$ for a stochastic choice function $p \colon X \times \X \to [0,1]$ implies that $p$ is \scwc.
		
		First, we build a partition of $X$. 
		By hypothesis, there exists a non-trivial weak category $G$ in $X$. 
		Let $G$ be the first subset $X_1$ in the partition. 
		Now look for a non-trivial category in $X \setminus X_1$. 
		If any, pick one and let it be the second subset $X_2$ in the partition, and proceed in the same way until no more non-trivial categories are found. 
		Assign remaining items to singletons, and collect indices in $I = \{1, \dots, k\}$.

		We now show that $p$ is \scwc\ with respect to partition $\{X_i\}_{i \in I}$ built above.	
		For each $i \in I$, define $\sigma_i$ as the restriction of $p$ to $X_i$.
		That is, for all $B \subseteq X_i$, $a \in B$, $\sigma_i(a, B) = p(a, B)$.
		%Also, define $\omega(i, \pi(A)) = \sum_{x \in X_i} p(x, A)$. 
		
		Consider a menu $A \in \X$, and denote $S = A \cup X_i$, $E = A \setminus X_i$ the set of items in $A$ within and outside the weak category $X_i$, respectively. 	
		By Definition~\ref{DEF:general category}, \cind{} holds for $X_i$, so that the ratio $\frac{p(a, S \cup E)}{p(a, S)} = \Delta$ is fixed for all $a \in S$ to some $\Delta \in \mathbb{R}$. 
		It follows that
		$$
		p(a, S \cup E) = \sigma_i(a, S) \, \cdot \, \Delta.
		$$
		
		Define $\omega_{A}(i, \pi(S \cup E)) \ = \ p(S, S \cup E) \ = \ \displaystyle \sum_{a \in S} p(a, A)$, and notice that
		$$ 
		\displaystyle p(S, S \cup E) = \sum_{a \in S} p(a, S \cup E) = \sum_{a \in S} \sigma_i(a, S) \, \cdot \, \Delta = \Delta, 
		$$ 
		and as a result $p(a, A) = \sigma_i(a, A \cap X_i) \  \omega_{A}(i, \pi(A))$.
	\end{proof}

	\subsection{A generating population} \label{SEC:a population}
	
	\resolvabilitypopulation*
	
	\begin{proof} 
		$ $ \newline
		%\textcolor{blue}{\textbf{sketch of proof.}}
		To prove this, it suffices to exhibit a probability distribution $Q$ on $\mathcal{R}$ that represents the \scc{} $p$.
		By definition, each deterministic choice function $c_X \in \mathcal{R}$ can be broken down into a choice $c_I$ on the set $I$ of indices, and a collection $\{c_i\}_{i \in I}$ of choices defined on sets in partition $\{X_i\}_{i \in I}$. 
		Denote $\I = 2^I \setminus \{\varnothing\}$ and $\X_i = 2^{X_i} \setminus \{\varnothing\}$ for each $i \in I$.
		As no other constraint is required, these choices can take any value and are mutually independent.
		
		Let $Q$ be defined as follows
		$$
		Q(c_X) \ = \ Q(c_I, \{c_i\}_{i \in I}) \ = \ \displaystyle \prod_{J \in \I} \omega(c_I(J), J) \prod_{i \in I} \prod_{S \in \X_i} \sigma_i(c_i(S), S).
		$$
		In other words, we model each choice as a set of independent events, and attach to each choice function the compound probability that all the defining events realize.
		To check that $Q$ is a probability distribution, it suffices to notice that for each choice $c_X$ there is a choice $c_X^\prime$ which differs only for one factor in the product. 
		Such factor corresponds to a choice in some menu, either in $\mathcal{I}$ or in some $\mathcal{X}_i$. 
		It follows that $\displaystyle \sum_{c_X \in \mathcal{R}}Q(c_X) = 1$, as can be checked by sequentially summing across all choices which differ only for one factor in a menu. 
		
		Notice now that $a \in X_j$ for some $j \in I$, so that
		\begin{align*}
			\sum_{c_X : c_X(A) = a} Q(c_X) \ & = \ \sum_{c_X : c_I(\pi(A)) = j \wedge c_i(A \cap X_j) = a} Q(c_I, \{c_i\}_{i \in I}) \\
			& = \ \sum_{c_I(\pi(A)) = j \wedge c_i(A \cap X_j) = a} \displaystyle \bigg( \prod_{J \in \I} \omega(c_I(J), J) \prod_{i \in I} \prod_{S \in \X_i} \sigma_i(c_i(S), S) \bigg)
		\end{align*}
		
		with $c_I(\pi(A)) = j$ and $c_i(A \cap X_j) = a$ for all products in the summation. 
		
		Thus we have 
		\begin{align*}
			& \sum_{c_X : c_X(A) = a} Q(c_X)\\
			& = \ \omega(j, \pi(A)) \sigma_j(a, A \cap X_j) \underbrace{\sum_{c_I(\pi(A)) = j \wedge c_i(A \cap X_j) = a} \displaystyle \Bigg( \prod_{J \subseteq \Omega_i \setminus \{j\}} \omega(c_I(J), J) \prod_{i \in I \setminus \{j\}} \prod_{S \subseteq X_i} \sigma_i(c_i(S), S)\Bigg)}_{= 1} \\
			& = \ p(a, A).
		\end{align*}
	\end{proof}

	\bigskip
	
	\population*
	
	\begin{proof}
		$ $ \newline
		$(2) \Rightarrow (1)$. Choices in $\mathcal{R}$ are such that 
		$$
		c_X(A) = c_{i}(A \cap X_i), \quad \text{with } \ i = c_I(\pi(A)),
		$$ 
		where $\pi \colon X \to I$ is the map that assigns each item to a set in partition $\{X_i\}_{i \in I}$.
		Let $q$ be a probability distribution on $\mathcal{R}$.
		Distribution $q$ induces a stochastic choice on $X$: $p(a, A) = q(\{c_X \colon c_X(A) = a\})$.
		By definition, 
		\begin{align*}
			p(A \cap X_i, A) = \sum_{a \in A \cap X_i} p(a, A) & = \sum_{a \in A \cap X_i} q(\{c_X \colon c_X(A) = a\})  \\
			& = q(\{c_X \colon c_X(A) \in A \cap X_i\})  \\
			& = q(\{c_X \colon c_I(\pi(A)) = i\}).
		\end{align*}
		
		Similarly, 
		\begin{align*}
			p(B \cap X_i, B) = \sum_{b \in B \cap X_i} p(b, B) & = \sum_{b \in B \cap X_i} q(\{c_X \colon c_X(B) = b\})  \\
			& = q(\{c_X \colon c_X(B) \in B \cap X_i\})  \\
			& = q(\{c_X \colon c_I(\pi(B)) = i\}).
		\end{align*}
		
		Since $\pi(A) = \pi(B)$ and $\{c_X \colon c_I(\pi(A)) = i\} = \{c_X \colon c_I(\pi(B)) = i\}$, it follows that the equality $p(A \cap X_i, A) \  =  \ p(B \cap X_i, B)$ holds.
		
		\bigskip
		
		$(1) \Rightarrow (2)$. 
		To prove this implication we exhibit a probability distribution $Q$ on the set $\mathcal{R}$ of deterministic choice functions resolvable with respect to $\{X_i\}_{i \in I}$. 
		It is convenient to first define a stochastic choice on the set of indices, $\omega \colon I \times \I \to [0, 1]$, as follows:
		$$
		\omega(i, J) \ = \ p(A \cap X_j, A), \text{ for some } A \text{ such that } \pi(A) = J.
		$$
		By $(1)$, $\omega$ is well-defined. 
		Now, write $p(a, A | A \cap X_i)$ the probability that $a \in X_i$ is chosen from $A$, conditional on the event that an item in $A \cap X_i$ is chosen. 
		Namely, $p(a, A | A \cap X_i) \ = \ \frac{p(a, A)}{p(A \cap X_i, A)} \ = \ \frac{p(a, A)}{\omega(i, \pi(J))}$. 
		Finally, define $Q$ as follows
		$$
		Q(c_X) \ = \ \displaystyle \prod_{B \in \X} p(c_X(B), B) \ = \ \prod_{B \in \X} p(c_X(B), B | B \cap X_{i = \pi(c_X(B))}) \omega(\pi(c_X(B)), \pi(B)).
		$$
		\noindent We therefore have 
		\begin{align*}
			p(a, A) \ & = \ \sum_{c_X \colon c_X(A) = a} Q(c_X)  \\
			& = \sum_{c_X \colon c_X(A) = a} \Bigg(\prod_{B \in \X} p(c_X(B), B | B \cap X_{i = \pi(c_X(B))}) \omega(\pi(c_X(B)), \pi(B)) \Bigg) \\
			& =  p(c_X(A), A | A \cap X_i) \omega(i, \pi(A)),
		\end{align*}
		where the last equality follows from considering that
		$$
		\sum_{c_X \colon c_X(A) = a} \Bigg( \prod_{B \in \X \setminus A} p(c_X(B), B | B \cap X_{i = \pi(c_X(B))}) \  \omega(\pi(c_X(B)), \pi(B))\Bigg) \ = \ 1.
		$$
	\end{proof}

	\subsection{Stochastic choices with categorization and \rum}

	\RUMstability*
	
	\begin{proof} \label{PROOF:categorizationRUM}
		$ $	\newline
		\noindent [$\Rightarrow$].
		Since $\omega \colon I \times \I \to [0,1]$ is \rum, there exists a probability distribution $\mathit{v}$ defined on the set $\mathcal{L}_I$ of linear orders on $I$ that provides a random utility representation $(\mathit{v}, \mathcal{L}_I)$ of $\omega$.
		Since each $\sigma_i$ is \rum{}, each has a \rum{} representation $(\mathit{s}_i, \mathcal{L}_i)$, where $\mathcal{L}_i$ is the set of linear orders on $X_i$, and $\mathit{s}_i$ is a probability distribution on $\mathcal{L}_i$.
		
		Now, let $\mathcal{L}_X$ be the set of all linear orders on $X$, and $\overline{\mathcal{L}_X}$ its subset of orders $L_X$ such that 
		$$
		(\exists a, b \in X \ \text{s.t.} \ a \in X_i, \ b \in X_j, \ \ a L_X b) \ \ \Rightarrow \ \ x L_X y \ \ \forall x \in X_i, \forall y \in X_j.
		$$
		Linear orders in $\overline{\mathcal{L}_X}$ unequivocally rank categories. 
		With some abuse of notation, we write $\pi(L_X)$ the ranking of categories induced by $L_X \in \overline{\mathcal{L}_X}$. 
		Also, we write $L_{X \upharpoonright i}$ to denote the binary relation induced by $L_X$ on items in $X_i$.
		We claim that the couple $(\mathit{q}, \mathcal{L}_X)$, with $\mathit{q}$ defined as follows:
		$$
		\mathit{q}(L_X) = 
		\begin{cases}
			\mathit{v}(\pi(L_X)) \ \prod_{i \in I} \mathit{s}_i(L_{X \upharpoonright i}), & \text{if } L_X \in \overline{\mathcal{L}_X}, \\
			0, & \text{otherwise},
		\end{cases}
		$$
		is a \rum{} representation of $p$. 
		It suffices to observe that 
		\begin{align*}
			p(a, A) \ & =  \ \omega(i, \pi(A)) \ \sigma_i(a, A \cap X_i) \\
			%= \mathit{v}(\{L_I \in \mathcal{L}_I : i = \max(\pi(A), L_I)\}) \ \mathit{s}(\{L_i \in \mathcal{L}_i : a = \max(A \cap X_i, L_i)\}) \ = \\
			& \displaystyle = \ \bigg( \sum_{L_I \in \mathcal{L}_I : i = \max(\pi(A), L_I)}\mathit{v}(L_I)\bigg) \ \bigg(\sum_{L_i : a = \max(A \cap X_i, L_i)}\mathit{s}_i(L_i) \bigg)
		%	& = \displaystyle \sum_{J : j = \max(\pi(A), L_J), a = \max(A \cap X_j, L_j)} v(L_J) s_j(L_j)
			 =  \displaystyle \sum_{L_X : a = \max(A, L_X)} \mathit{q}(L_X). 
		\end{align*}
		where the last equality is given by the following procedure. 
		First, fix some $L_J$ such that $i = \max(\pi(A), L_J)$ and some $L_j$ with $a = \max(A \cap X_i, L_j)$ and consider all linear orders $L_X$ such that $L_J = \pi(L_X)$ and $L_j = L_{X \upharpoonright j}$. 
		Clearly, $a = \max(A, L_X)$ for any $L_X$ of this kind. 
		\begin{align*}
			\sum_{\substack{L_X : L_J = \pi(L_X) \\ L_{X \upharpoonright j} = L_j}} \displaystyle \mathit{q}(L_X)  \ & = \ \sum_{\substack{L_X : L_J = \pi(L_X) \\ L_{X \upharpoonright j} = L_j}}
			 v(\pi(L_X)) \prod_{i \in I} s_i(L_{X \upharpoonright i})\\[0.5em]
			& = \ \displaystyle \sum_{\substack{L_X : L_J = \pi(L_X) \\ L_{X \upharpoonright j} = L_j}}
			 v(L_J) s(L_j) \prod_{i \in I \setminus j} s_i(L_{X \upharpoonright i}) \\[0.5em] 
			& = \  v(L_J) s(L_j) \sum_{\substack{L_X : L_J = \pi(L_X) \\ L_{X \upharpoonright j} = L_j}} \Bigg(\prod_{i \in I \setminus j} s_i(L_{X \upharpoonright i})\Bigg).
		\end{align*}
		Let us focus on the last summation. 
		Each addend is a product of probabilities, each one attached to a linear order defined on a set $X_i$ in the partition. 
		In particular, we sum the products given by all possible combinations of such linear orders.
		Divide the sum as follows. 
		Consider first combinations in which all linear orders in the first $|I - 2|$ indices are the same, and differ for the $|I - 1| - th$ one. 
		Summing the products of combinations that share the first $|I - 2|$ factors, the common part in the product can be factored out, while the remaining factors sum up to $1$. 
		In this way, we obtain products that are shorter by one factor. 
		To sum them, proceed in the same fashion and sum first those which share the same first $|I - 3|$ factors. 
		By the same reasoning, shorter products will be obtained. 
		Applying this method repeatedly, we obtain that 
		$$
		\sum_{\substack{L_X : L_J = \pi(L_X) \\ L_{X \upharpoonright j} = L_j}} \Bigg(\prod_{i \in I \setminus j} s(L_{X \upharpoonright i}) \Bigg) = 1.
		$$
		It follows that 
		$$
		\displaystyle \sum_{L_X : a = \max(A, L_X)} \mathit{q}(L_X) \ = \sum_{\substack{L_I : i = \max(\pi(A), L_I) \\ L_i : a = \max(A \cap X_i, L_i)}} v(L_I) s(L_i) \ = \bigg( \sum_{L_I : i = \max(\pi(A), L_I)}\mathit{v}(L_I)\bigg) \ \bigg(\sum_{L_i : a = \max(A \cap X_i, L_i)}\mathit{s}_i(L_i) \bigg).
		$$
		Apply the same method to show that $q$ is a probability distribution, i.\ e., $\displaystyle \sum_{L_X \in \mathcal{L}_X}q(L_X) = 1$.
		
		\bigskip
		
		\noindent [$\Leftarrow$].
		Since $p$ is \scc{} and \rum, $\sigma_i$ is \rum{} as well, being a restriction of $p$ to $X_i$, for all $i \in I$.
		
		By definition, $\omega(i, \pi(A)) = \displaystyle \sum_{a \in A \cap X_i} p(a, A)$, for all $A \in \X, \ i \in I$.
		
		Let $\mathcal{L}$ be the set of linear orders on $X$. 
		Since $p$ is \rum, there exists a probability distribution $\mathit{q}$ on $\mathcal{L}$ that represents $p$.
		
		Also, since $p$ is \scc,
		\begin{align*}
			\omega(i, \pi(A)) \ & =  \ \displaystyle \sum_{a \in A \cap X_i} p(a, A) \ = \ \mathit{q}(\{L : \max(A, L) \in A \cap X_i\}) \\ 
			& = \displaystyle \sum_{L : \max(A, L) \in A \cap X_i} \mathit{q}(L) \ = \ \omega(i, \pi(B))
		\end{align*} 
		for all $B$ s.t.\ $\pi(B) = \pi(A)$.
		
		Without loss of generality, consider a menu $E \in \X$ such that $|E| = 3$, and $E = \{a, b, c\}$ with $a, b \in X_i$ for some $i \in I$ and $c \in X_j$ for some $j \neq i$. 
		Then, 6 orderings are possible. 
		
		\begin{center}
%			\begin{tabular}{ c c }
%				linear order & probability \\
%				$a \ L \ b \ L \ c$ & $\alpha =  \sum_{L : a L b L c} \mathit{q}(L)$ \\ 
%				$a \ L \ c \ L \ b$ & $\beta$ \\  
%				$b \ L \ a \ L \ c$ & $\gamma$ \\
%				$b \ L \ c \ L \ a$ & $\delta$ \\
%				$c \ L \ a \ L \ b$ & $\epsilon$ \\
%				$c \ L \ b \ L \ a$ & $\mu$	   
%			\end{tabular}
%		
		\begin{tabular}{ c | c | c | c | c | c | c }
			order & $a L  b  L  c$ & $a  L  c  L  b$ & $b L a L c$ & $b L c L a$ & $c L a L b$ & $c L b L a$ \\
			\hline
			probability & $\alpha$ & $\beta$ & $\gamma$ & $\delta$ & $\epsilon$ & $\mu$
		\end{tabular}
		\end{center}
		
		Since $ \pi(E) \ = \ \pi(ac) = \ \pi(bc)$, by definition of \scc{} we have
		\begin{align*}
			\omega(i, \pi(E)) \ = \ \omega(i, \pi(ac)) \quad \Leftrightarrow & \quad \alpha + \beta + \gamma + \delta =  \alpha + \beta + \gamma
			\quad 	\Rightarrow \delta = 0 \\
			\text{ and } \quad 	\omega(i, \pi(ac)) \ = \ \omega(i, \pi(bc)) \quad \Leftrightarrow & \quad \alpha + \beta + \gamma  =  \alpha + \gamma + \delta
			\quad \Rightarrow \beta = 0.
		\end{align*}\
		
		Define $\overline{\mathcal{L}} \subseteq \mathcal{L}$ the set of linear orders on $X$ s.t.\ 
		$$
		(a \in X_i, \ b \in X_j, \ \ a L b) \ \ \Rightarrow \ \ x L y \ \ \forall x \in X_i, \forall y \in X_j,
		$$
		and notice that $\beta$ and $\delta$ are probabilities associated to linear orders $L \notin \overline{\mathcal{L}}$.
		Since both must be null, we conclude $\nexists L \in \mathcal{L} \setminus \overline{\mathcal{L}}$ s.t.\ $q(L) > 0$.
		It follows that $\omega$ has a \rum{} representation $(v, \mathcal{L}_I)$, where $\mathcal{L}_I$ is the set of linear orders $L_I$ on $I$ and $\mathit{v}(L_I) = q(\{L : \max(X, L) \in \pi^{-1}(\max(I, L_I))\})$.	 
	\end{proof}
	
	\bigskip
	
	\localrationality* 
	
	\begin{proof} \label{local rationalizability}
		$ $	\newline
		In order for $G$ to be a non-trivial weak category, \cind{} must be satisfied. 
		Since $p$ on $X$ is locally rationalizable at $G \subseteq X$, there exists a probability distribution $q$ on the set of linear orders $\mathcal{L}_G$ on $G$ such that, for all $x \in G$, $A \subseteq X$, 
		$$
		p(x, A) \ = \ p(G, A) \ \ q(\{\ell \in \mathcal{L}_G \colon \max(A \cap G, \ell) = x\}).
		$$  
		Consider any menu $A$, and let $S = G \cap A$ and $E = A \ \setminus G$. 
		For any $a, b \in S$, 
		$$
		\displaystyle \frac{p(a, S \cup E)}{p(b, S \cup E)} \ = \ \frac{p(S, A) \ \ q(\{\ell \in \mathcal{L}_G \colon \max(S, \ell) = x\})}{p(S, A) \ \ q(\{\ell \in \mathcal{L}_G \colon \max(S, \ell) = x\})} \ = \frac{p(a, S)}{p(b, S)},
		$$ 
		so that \cind{} holds and $G$ is a non-trivial weak category for $p$. 
	\end{proof}

\end{appendices}

%%%%%%%%%%%%%%%%%%%%%%%%%%%%%%%%%%%%%%%%%%%%
%%%%%%%%%%%%%%%%%%%%%%%%%%%%%%%%%%%%%%%%%%%%

\bibliographystyle{apalike}
\bibliography{referencesssss}  % references.bib file

\end{document}